\title{Stability analysis of kinetic orientation-based shape descriptors}
\titlerunning{Stability analysis of kinetic orientation-based shape descriptors}
\author[1]{Wouter Meulemans}
\author[1]{Kevin Verbeek}
\author[1]{Jules Wulms}
\affil[1]{Dept. of Mathematics and Computer Science, {TU Eindhoven, The Netherlands}\\
\texttt{[w.meulemans|k.a.b.verbeek|j.j.h.m.wulms]@tue.nl}}
\authorrunning{W. Meulemans, K. Verbeek and J. Wulms}
\subjclass{Theory of computation $\rightarrow$ Computational geometry}
\keywords{Stability analysis, Time-varying data, Shape descriptors}
\newcommand{\reals}{\mathbb{R}}
\newcommand{\orientationspace}{O}
\DeclareMathOperator{\OPT}{OPT}
\DeclareMathOperator{\TS}{\rho_{TS}}
\DeclareMathOperator{\LS}{\rho_{LS}}
\newcommand{\pc}{\textsc{pc}\xspace}
\newcommand{\obb}{\textsc{obb}\xspace}
\newcommand{\strip}{\textsc{strip}\xspace}
\newcommand{\Dt}{\Delta t}
\newcommand*{\etal}{et al.\xspace}
\newcommand{\comment}[3]{\marginpar{\footnotesize\textcolor{#3}{#1:} #2}}
\renewcommand{\comment}[3]{}
\begin{document}

\maketitle

\begin{abstract}
We study three \emph{orientation-based} shape descriptors on a set of continuously moving points: the first principal component, the smallest oriented bounding box and the thinnest strip. Each of these shape descriptors essentially defines a cost capturing the quality of the descriptor and uses the orientation that minimizes the cost. This optimal orientation may be very unstable as the points are moving, which is undesirable in many practical scenarios. If we bound the speed
with which the orientation of the descriptor may change, this may lower the quality of the resulting shape descriptor. In this paper we study the trade-off between stability and quality of these shape descriptors. 

We first show that there is no \emph{stateless algorithm}, an algorithm that keeps no state over time, that both approximates the minimum cost of a shape descriptor and achieves continuous motion for the shape descriptor. On the other hand, if we can use the previous state of the shape descriptor to compute the new state, we can define ``chasing'' algorithms that attempt to follow the optimal orientation with bounded speed. We show that, under mild conditions, chasing algorithms with sufficient bounded speed approximate the optimal cost at all times for oriented bounding boxes and strips. The analysis of such chasing algorithms is challenging and has received little attention in literature, hence we believe that our methods used in this analysis are of independent interest.
\end{abstract}

\newpage

\section{Introduction}
\label{sec:intro}

Given the amount of data that is widely available nowadays, algorithms play an important part in analyzing this data and finding useful patterns in it. For many applications it is important that an algorithm is \emph{stable}: small changes in the input lead to small changes in the output. 
These applications include the analysis or visualization of time-varying data. 
We can effectively visualize how a large set of moving points evolves, for example, by drawing a glyph capturing the direction of movement (e.g. an arrow or line segment) for a few subsets of the points, which gives a clear and comprehensible overview of the data. 
We can also summarize a set of moving points in a single dimension by projecting all points to their first principal component, as used in the visualization technique MotionRugs~\cite{wulms2019spatially}. 
However, if the orientation of the glyphs or first principal component changes erratically even with small changes in the data, it becomes hard to see how the data changes over time. 
In other words, unstable algorithms may result in large visual changes even for small data changes, severely limiting the efficacy of such methods for visualization of time-varying data.
Thus, it is important to develop stable algorithms which deal with these discrete changes of the input in an elegant way. As a result, stable algorithms can efficiently work with the available data without losing their effectiveness over time.

The applications mentioned above all use some type of shape descriptor. Shape descriptors are simplified representations of more complex shapes. They are used, for example, as summaries of a large collection of data, where we are not interested in all the details, but simply want to have an overview of the most important features. Shape descriptors play an important role in many fields that perform shape analysis, such as 
computer vision (shape recognition) \cite{DBLP:journals/pami/BelongieMP02,DBLP:conf/cvpr/BronsteinK10,zhong2009intrinsic}, 
computer graphics (bounding boxes for broad-phase collision detection) \cite{DBLP:journals/cgf/BarequetCGMT96,DBLP:journals/jgtools/Bergen97,DBLP:conf/siggraph/GottschalkLM96,DBLP:journals/tvcg/KlosowskiHMSZ98},
medical imaging (diagnosis or surgical planning) \cite{brzakovic1990approach,DBLP:journals/tmi/GuWCTY04,DBLP:journals/tmi/KelemenSG99,DBLP:journals/ivc/ZitovaF03}, 
and machine learning (shape classification) \cite{DBLP:conf/iccv/SuMKL15,DBLP:conf/iccv/VarmaR07,DBLP:journals/pami/XieDZWF17,DBLP:conf/cvpr/ZhangBMM06}.
In this paper we are interested in shape descriptors that specifically capture the orientation of the data -- in our case a point set. 
The three shape descriptors we consider here are the \emph{first principal component} (\pc), the smallest \emph{oriented bounding box} (\obb) and the thinnest \emph{covering strip} (\strip).
These three shape descriptors can be used in applications (such as the orientation glyphs or MotionRugs) where we always want to output a reasonable orientation, even when this orientation is not very pronounced. Besides all three being orientation-based shape descriptors, all three are unfortunately also unstable: small changes in the point cloud they are representing can result in discrete ``flips'' in their orientation (see Figure~\ref{fig:flips}).
We analyze the stability of these shape descriptors and develop stable variations on them.

\begin{figure}[b]
    \centering
    \includegraphics{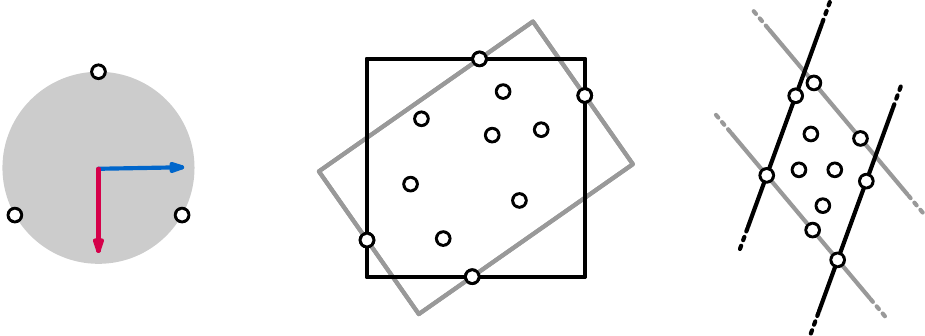}
    \caption{A flip in orientation for \pc, \obb{} and \strip{}. Small changes in the positions of the points make one of the orientations optimal over the other.}
    \label{fig:flips}
\end{figure}

\subparagraph{Problem description.}
The main goal of this paper is to formally analyze the trade-off between quality and stability for orientation-based shape descriptors. Our input consists of a set of $n$ moving points $P = P(t) = \{p_1(t), \ldots, p_n(t)\}$ in two dimensions, where each $p_i(t)$ is a continuous function $p_i\colon [0, T] \rightarrow \reals^2$. We assume that, at each time $t$, not all points are at the same position. 
The output consists of an orientation $\alpha = \alpha(t)$ of the shape descriptor for every point in time $t \in [0, T]$, which need not match the optimal orientation due to stability constraints. To quantify the quality of any output orientation, we define each of the three shape descriptors as the minimum of an optimization function $f(\alpha, P)$. Let $\mathcal{L}(\alpha)$ be the set of lines with orientation $\alpha$, and let $d(L, p)$ be the distance between a point $p$ and a line $L$. Furthermore, let the \emph{extent} of $P$ along a unit vector $\alpha$ be $w_\alpha(P) = \max_{p, q \in P} (p - q) \cdot \alpha$, and let $\alpha^\perp$ be the vector (orientation) orthogonal to $\alpha$. We can define the orientation of \pc, \obb{} or \strip{} as the orientation $\alpha$ that minimizes the following functions:
\begin{description}
\item[Principal component:] $f_{\pc}(\alpha, P) = \min_{L \in \mathcal{L}(\alpha)} \sum_{p \in P} d(L, p)^2$
\item[Oriented bounding box:] $f_{\obb}(\alpha, P) = w_\alpha(P) w_{\alpha^\perp}(P)$
\item[Covering strip:] $f_{\strip}(\alpha, P) = w_{\alpha^\perp}(P)$
\end{description}

\noindent Though various other options are possible, we believe the functions above naturally fit to the shape descriptors. These functions quantify the quality of a shape descriptor for any orientation $\alpha$. This allows us to also consider shape descriptors of suboptimal quality, which do not fully minimize the optimization function. This in turn enables us to make a trade-off between quality and stability. When computing a shape descriptor, we typically compute more than just an orientation. 
However, 
the stability is mostly affected by the optimal orientation: if the optimal orientation changes continuously and the points move continuously, then the shape descriptor changes continuously as well. We therefore ignore other aspects of the shape descriptors to analyze their stability, and assume that these aspects are chosen optimally for the given orientation without any cost with regard to the stability.

Note that an orientation $\alpha(t)$ is an element of the real projective line $\reals\mathbb{P}^1$, but we typically represent $\alpha(t)$ by a unit vector in $\reals^2$ and implicitly identify opposite vectors, which is equivalent.
Furthermore, we assume that the output $\alpha(t)$ is computed for all real values $t \in [0, T]$. This assumption simplifies our analysis. In practice, algorithms can be executed only finitely often, once per some defined time step. Introducing such discreteness into the problem may lead to interesting effects, but it is not the focus of this paper. For sufficiently small time steps, the continuous analysis will provide a good approximation. 


\subparagraph{Kinetic algorithms.}
Algorithms for kinetic (moving) input can adhere to different models, which may influence the results of the stability analysis. Let $\mathcal{A}$ be an algorithm mapping input to output, and $I(t)$ the input depending on time $t$. We distinguish the following models.

\begin{description}
    \item[Stateless algorithms:] The output depends only on the input $I(t)$ at a particular point in time, and no other information of earlier time steps. This in particular means that if $I(t_1) = I(t_2)$, then $\mathcal{A}$ produces the same output at time $t_1$ and at time $t_2$.
    \item[State-aware algorithms:] The algorithm $\mathcal{A}$ has access not only to the input $I(t)$ at a particular time, but also maintains a state $S$ over time; in practice this is typically the output of the previous time step. Thus, even if $I(t_1) = I(t_2)$, then $\mathcal{A}$ may produce different results at time $t_1$ and $t_2$ if the states at those times are different.
    \item[Clairvoyant algorithms:] The algorithm $\mathcal{A}$ has access to the complete function $I(t)$ and can adapt to future inputs. Thus, the complete output over time can be computed offline.
\end{description}

In this paper we consider both stateless and state-aware algorithms, but no clairvoyant algorithms. A stateless algorithm can be stable only if it defines a mapping from input to output that is naturally continuous. On the other hand, a state-aware algorithm can easily enforce continuity by using its state to keep track of earlier output. We can then define a special kind of state-aware algorithm called a \emph{chasing algorithm}:

\begin{description}
    \item[Chasing algorithms:] The algorithm $\mathcal{A}$ maintains the most recent output (in our case a shape descriptor) as the state $S$ and uses only the input at the current time step $I(t)$ to find a new optimal solution $\OPT$. The algorithm then moves the solution in $S$ towards the new optimal solution $\OPT$ at the maximum allowed speed. This solution is the output for the current time step $t$ and is subsequently stored in $S$ for the next time step.
\end{description}

The output at time $t$ might not coincide with $\OPT$, as the solution in $S$ was too different from $\OPT$ and the maximum allowed speed did not allow $S$ to catch up to $\OPT$: the output of our algorithm chases after the optimal solution. The challenge lies in bounding the ratio between the quality of the solutions used by the algorithm and the (unstable) optimal quality. Since we do not consider discrete time steps in this paper, we simply assume that a chasing algorithm maintains a solution over time, and it can choose the rate of change of the solution (e.g., rotation speed/direction) at every point in time.


\subparagraph{Stability analysis.}
Classical considerations to assess algorithms include quality (e.g., optimality, approximation ratio) and efficiency (in e.g. time or memory). These aspects consider the relation between input and output; a worst-case analysis can typically consider a single input.
Stability instead considers how outputs relate between related inputs; a worst-case scenario is a set or sequence of inputs.
This requires a different analysis approach. Here, we use the framework introduced by Meulemans~\etal~\cite{meulemans2017framework}, restricting their definitions to our setting as described below.

Let $\mathcal{A}$ be an algorithm that takes a point set as input and computes an orientation (where the potential state for state-aware algorithms is implicit). 
For \emph{topological stability}, we aim to analyze how well such an algorithm can perform with respect to the (unstable) optimum, if we restrict its output to move continuously but potentially arbitrarily fast for continuously moving input.
The notion of continuity is captured by choosing an appropriate topology for the input space and the output space.
This is straightforward: the input space is $\reals^{2n}$ and the output space is topologically equivalent to the unit circle which we denote here by $\orientationspace$. We choose these topologies for our analysis. 
The \emph{topological stability ratio} of shape descriptor $\Pi$ is then defined as
\[
\rho_{TS}(\Pi) = \inf_{\mathcal{A}} \sup_{P(t)} \max_{t \in [0, T]} \frac{f_{\Pi}(\mathcal{A}(P(t)), P(t))}{\min_\gamma f_{\Pi}(\gamma, P(t))}
\]
where the supremum is taken over all continuously moving point sets $P(t)$, and the infimum is taken over all algorithms $\mathcal{A}$ for which $\mathcal{A}(P(t))$ is continuous. The \emph{$K$-Lipschitz stability ratio} $\rho_{LS}(\Pi,K)$ is defined almost exactly the same: the only difference is that we now take 
the infimum 
over all algorithms $\mathcal{A}$ for which $\mathcal{A}(P(t))$ is $K$-Lipschitz, that is, the output moves continuously with speed bounded by $K$. We measure the output in radians: the orientation or angle can change with a speed of at most $K$ radians per time unit. Lipschitz stability requires bounded input speed and scale invariance \cite{meulemans2017framework}. Here, we assume that points move with at most unit speed and that their diameter is at least 1 at all times.
Note that we can always appropriately scale to meet these assumptions, if not all points coincide in a single location. The unit diameter assumption is only mild, if points represent actual objects such as fish \cite{wulms2019spatially} which have a certain size and cannot occupy the same physical space. 

\subparagraph{Related work.}
Shape descriptors are a wide topic, studied in various subfields of computer science. A full exposition of literature is out of scope. We focus on results related to \pc, \obb or \strip, and results on stability.

The ratio between the volume of the bounding box using the orientation of \pc and the minimal-volume bounding box is unbounded \cite{DBLP:journals/comgeo/DimitrovKKR09}. Similar to \pc, other fitness measures have been considered with respect to oriented lines, such as the sum of distances or vertical distances \cite{DBLP:conf/cocoon/ChinWW99}.
Computing \obb for static point sets is a classic problem in computational geometry. In two dimensions, one side of the optimal box aligns with a side of the convex hull and it can be computed in linear time after finding the convex hull \cite{DBLP:journals/cacm/FreemanS75,toussaint1983solving}; a similar property holds in three dimensions, allowing a cubic-time algorithm \cite{DBLP:journals/ijpp/ORourke85}.
The relevance of bounding boxes in 3D as a component of other algorithms also led to efficient approximation algorithms 
\cite{DBLP:journals/jal/BarequetH01}.
Bounding boxes find applications in tree structures for spatial indexing \cite{DBLP:conf/sigmod/BeckmannKSS90,DBLP:conf/sigmod/Guttman84,DBLP:conf/sigmod/RoussopoulosL85,sellis1987r+} and in speeding up collision detection and ray-tracing techniques \cite{DBLP:journals/cgf/BarequetCGMT96,DBLP:journals/jgtools/Bergen97,DBLP:conf/siggraph/GottschalkLM96}.
The optimal \strip can be computed using the same techniques as \obb in two dimensions  \cite{DBLP:journals/cacm/FreemanS75,toussaint1983solving}.
Agarwal et al. \cite{DBLP:journals/jacm/AgarwalHV04} provide an $O(n +1/\epsilon^{O(1)})$-time approximation algorithm via $\epsilon$-kernels for various measures of a point set, including \strip and \obb.

The stability of kinetic 1-center and 1-median problems has been studied by Bespamyatnikh~\etal~\cite{DBLP:conf/dialm/BespamyatnikhBKS00}. They developed approximations by fixing the speed at which the center/median point moves and provide some results on the trade-off between solution quality and speed.
Durocher and Kirkpatrick \cite{durocher2006steiner,durocher2008bounded} studied the stability of the kinetic 2-center problem, that is, covering a set of moving points with two disks; their approach allows a trade-off between solution quality and the speed at which solution changes. 
De Berg~\etal~\cite{de2013kinetic} show similar results in the black-box kinetic-data-structures model.
Letscher et al. \cite{letscher2016stability} show that the medial axis for a union of disks changes continuously under certain conditions or if it is pruned appropriately . 
Meulemans et al. \cite{meulemans2017framework} introduced a framework for analyzing stability  and apply it to Euclidean minimum spanning trees on a set of moving points. They show bounded topological stability ratios for various topology definitions on the space of spanning trees, and that the Lipschitz stability ratio is at most linear, but also at least linear if the allowed speed for the changes in the tree is too low.
Van der Hoog et al. \cite{hoog2018topological} study the topological stability of the $k$-center problem, showing upper and lower bounds on the stability ratio for various measures.
They also provide a clairvoyant algorithm to determine the best ratio attainable for a given set of moving points.

\subparagraph{Results and organization.}
In Section~\ref{sec:stateless} we prove that there exists no stateless algorithm for any of the shape descriptors that is both topologically stable and achieves a bounded approximation ratio for the quality of the optimal shape descriptor. We then consider state-aware algorithms and analyze the topological stability for each of the shape descriptors in Section~\ref{sec:topological}. In Section~\ref{sec:lipschitz} we analyze the Lipschitz stability of chasing algorithms for all three shape descriptors. We show that chasing algorithms with sufficient speed can achieve a constant approximation ratio for \obb and \strip, if we indeed assume unit speed for the points and that the diameter of the point set is always at least 1. To the best of our knowledge, this is the first time a chasing algorithm has been analyzed. We believe that the new methods that we developed for this challenging stability analysis are of independent interest, and may be applied to the analysis of other chasing algorithms. Section~\ref{sec:lipschitzpc} shortly discusses why we cannot achieve Lipschitz stability for \pc via our chasing algorithm. We conclude the paper in Section~\ref{sec:conclusion}.

\section{Stateless algorithms}
\label{sec:stateless}
We prove that stateless algorithms cannot achieve bounded topological stability ratio for any of the three shape descriptors. This readily implies an unbounded $K$-Lipschitz stability ratio for any $K$.
The argument is entirely topological. A stateless topologically stable algorithm (with output behaving continuously) is a continuous map from the input space to the output space. Important for the theorem below is that, if all points of set $P$ lie on a single line with orientation $\alpha$, then $f_{\Pi}(\gamma, P) = 0$ if and only if $\gamma = \alpha$, for all considered shape descriptors.

\begin{theorem}\label{thm:stateless}
For stateless algorithms $\rho_{TS}(\obb) = \rho_{TS}(\pc) = \rho_{TS}(\strip) = \infty$ if the point set contains at least three points.  
\end{theorem}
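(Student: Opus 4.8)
The plan is to argue by contradiction using a topological obstruction, exploiting the structure of the configuration space of collinear point sets. Suppose some stateless algorithm $\mathcal{A}$ achieves a bounded topological stability ratio $\rho$. By the observation preceding the theorem, whenever all points of $P$ lie on a single line with orientation $\alpha$, the cost $f_\Pi(\gamma, P)$ vanishes exactly when $\gamma = \alpha$, so $\min_\gamma f_\Pi(\gamma, P) = 0$. For the ratio to stay bounded at such inputs, we must have $f_\Pi(\mathcal{A}(P), P) = 0$ as well, which forces $\mathcal{A}(P) = \alpha$. Thus $\mathcal{A}$, restricted to the subspace of collinear configurations, must return the line's orientation exactly. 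Since $\mathcal{A}$ is a continuous map $\reals^{2n} \to \orientationspace$ (continuity being equivalent to topological stability for a stateless algorithm), its restriction to this subspace is a continuous map into $\orientationspace \cong \reals\mathbb{P}^1$ that realizes the ``orientation of the line'' function.

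The core step is to build a loop in the space of collinear point sets along which the induced orientation winds around $\orientationspace$ an odd number of times (equivalently, traverses a non-contractible loop in $\reals\mathbb{P}^1$), yet which is contractible in $\reals^{2n}$. Concretely, with $n \ge 3$, I would take three points $p_1, p_2, p_3$ and move them so that they remain collinear at all times while the supporting line rotates by exactly $\pi$ over the loop — e.g.\ place the three points symmetrically on a line through the origin whose angle sweeps from $0$ to $\pi$, returning each $p_i$ to its start (since angle $\pi$ is identified with angle $0$ in $\orientationspace$, but the underlying rotation is a half-turn). The resulting closed curve $P(t)$ in $\reals^{2n}$ is genuinely a loop (each point returns home), and it is nullhomotopic because $\reals^{2n}$ is simply connected. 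But $\mathcal{A}\circ P$ traces a loop in $\orientationspace$ that lifts a half-turn, hence represents the nontrivial element of $\pi_1(\orientationspace)$. A continuous map cannot send a nullhomotopic loop to a non-nullhomotopic one, giving the contradiction. This argument is identical for all three descriptors because the only property used is that collinear inputs have a unique zero-cost orientation.

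The main obstacle — and the point needing care — is ensuring the constructed family $P(t)$ stays inside the valid input space and genuinely closes up while forcing an \emph{odd} winding. One must check that not all points coincide at any time (easy: keep them at distinct, bounded-away positions on the rotating line), and that the three points do not accidentally become concurrent in a degenerate way that makes ``the orientation'' ambiguous; choosing three distinct offsets along the line avoids this. It is also worth being explicit that the $n > 3$ case reduces to $n = 3$ by parking the extra points on the same moving line (or fixing them at a shared location on it — but that would collapse the diameter, so better to spread them along the line too). I would phrase the winding-number step via the double cover $S^1 \to \reals\mathbb{P}^1$: the lift of $\mathcal{A}\circ P$ is a path in $S^1$ from a point to its antipode, hence not a loop, certifying non-triviality in $\pi_1(\reals\mathbb{P}^1) \cong \integers/2\integers$. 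This makes the contradiction with contractibility of the loop in $\reals^{2n}$ fully rigorous.
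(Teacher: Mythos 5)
Your overall strategy is the same as the paper's: on collinear inputs a bounded ratio forces $\mathcal{A}$ to output the line's orientation exactly, and one then derives a contradiction from a contractible family of valid collinear configurations whose forced orientations trace a non-nullhomotopic loop in $\orientationspace$. The paper packages this as a map from $D^2$ whose boundary forces a degree-$2$ loop and invokes the Hopf theorem; your $\pi_1$/winding-number phrasing is an equivalent formulation. However, your key construction fails as stated. A half-turn of the plane about the origin sends $p$ to $-p$, so placing the points ``symmetrically on a line through the origin whose angle sweeps from $0$ to $\pi$'' does \emph{not} return each $p_i$ to its start: the points at $\pm r$ swap, and only a point at the origin is fixed. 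The identification of angle $\pi$ with angle $0$ happens in the orientation space $\reals\mathbb{P}^1$, not in $\reals^{2n}$, so the curve you describe is a path from $P$ to a permuted copy of $P$, not a loop, and the fundamental-group argument does not apply to it. To repair this you must either (a) sweep the angle through a full $2\pi$, so that each labeled point genuinely returns home and the forced orientation winds \emph{twice} around $\orientationspace$ --- winding number $2$ is still nontrivial in $\pi_1(\orientationspace)\cong\integers$, and this is exactly the paper's construction --- or (b) keep the half-turn but let the signed positions $s_i(t)$ of the points along the rotating line vary with $s_i(1)=-s_i(0)$, staggering the sign changes in time so that the points never all coincide.

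Two smaller points. First, $\orientationspace\cong\reals\mathbb{P}^1$ is homeomorphic to $S^1$, so $\pi_1(\orientationspace)\cong\integers$, not $\integers/2\integers$; your conclusion survives because an odd winding number is nontrivial, but it is the integer degree (not its parity) that matters, which is why the paper's even winding number $2$ also suffices. Second, the nullhomotopy of the loop must stay inside the domain of $\mathcal{A}$, i.e., avoid configurations with all points coincident; contracting ``in $\reals^{2n}$'' is not quite enough. For $n\geq 3$ the invalid configurations form a subspace of codimension $2n-2\geq 4$, so the complement is still simply connected, but this should be said --- the paper instead keeps the contraction valid explicitly by blending in a fixed non-collinear point set $P^*$.
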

\begin{proof}
The idea is to construct a continuous map $j\colon D^2 \rightarrow \reals^{2n}$ on the two-dimensional closed disk, such that the image of $j$ consists of valid point sets (not all points at the same position), and that the image of $\partial D^2$ under $j$ forces the orientation of the shape descriptor. We parameterize $D^2$ using polar coordinates $(r, \phi)$ for $0 \leq \phi < 2 \pi$ and $0 \leq r \leq 1$. We first construct a map $j'$ as follows:
\[
j'(r, \phi) = \{(\frac{r i}{n} \sin\phi, \frac{r i}{n} \cos\phi) \mid 1 \leq i \leq n\}    
\]
Since $j'$ always places all points on a line for a given $\phi$, the orientation of the shape descriptor is always forced (otherwise the approximation ratio is $\infty$). However, $j'(0, \phi)$ is not a valid point set, since it places all points at the origin. Now let $P^{*}$ be a point set of at least three points that are not all collinear. Interpreting $P^*$ as a vector in $\reals^{2n}$), we define $j(r, \phi) = j'(r, \phi) + (1 - r)P^{*}$. By the choice of $P^*$, $j(r, \phi)$ is always a valid point set. Furthermore, the orientation of the shape descriptors is still fixed for point sets $j(1, \phi) = j'(1,\phi)$, namely $\alpha = \phi \pmod{\pi}$. As a result, any stateless algorithm with an approximation ratio $\rho < \infty$ defines, along with $j$, a continuous mapping $h$ from $D^2$ to unit circle $\orientationspace$ where $\partial D^2$ is mapped twice around $\orientationspace$, i.e., a double cover of $\orientationspace$. The continuous mapping $h$ is simply the output of the stateless algorithm composed with $j$. We claim that mapping $h$ from $D^2$ to $\orientationspace$ cannot exist.

For the sake of contradiction, assume that such a map $h$ exists.
Let $f,g: \orientationspace \to D^2$ be continuous functions. Function $f$ maps every point $x \in \orientationspace$ to the boundary $\partial D^2 \subseteq D^2$ such that the mapping covers the whole boundary once, while $g$ maps all $x\in \orientationspace$ to a single point $y\in D^2$. We can continuously shrink the image of $f$ to a single point in $D^2$, in particular the image of $g$; hence $f$ and $g$ are homotopic.
We now consider $h \circ f$ and $h\circ g$ and use the degree of these mappings (as first defined in \cite{brouwer1911abbildung}) to show that $h$ cannot exist.
Since $f$ maps $\orientationspace$ to the boundary $\partial D^2 \subseteq D^2$, and $h$ maps $\partial D^2$ to a double cover of $\orientationspace$, we know that the degree of $h\circ f$ is two. On the other hand, $g$ maps all of $\orientationspace$ to a single point in $D^2$, therefore $h\circ g$ has degree 0. By the Hopf theorem~\cite{milnor1997topology} $h\circ f$ and $h\circ g$ cannot be homotopic, as they can only be homotopic if and only if their degrees are equal. However, $h\circ f$ and $h\circ g$ must be homotopic, since homotopy equivalence is compatible with function composition and $f$ is homotopic to $g$. This contradiction implies that $h$ cannot exist.
Thus the topological stability ratio is $\infty$ for stateless algorithms.
\end{proof}

\section{Topological stability}
\label{sec:topological}

In this section we turn to state-aware algorithms, and we analyze the topological stability of the shape descriptors. Specifically, we prove the following results.
\begin{theorem}\label{thm:topologicalstab}
The topological stability ratios of the shape descriptors are:
\begin{itemize}[noitemsep,topsep=0pt]
\item $\TS(\pc) = 1$,
\item $\TS(\obb) = \frac{5}{4}$,
\item $\TS(\strip) = \sqrt{2}$.
\end{itemize}
\end{theorem}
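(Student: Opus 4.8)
The plan is to handle each of the three shape descriptors separately, and for each one to prove a matching upper and lower bound on the topological stability ratio. For the upper bounds I would exhibit a state-aware algorithm whose output is continuous and whose cost never exceeds the claimed factor times the optimum; for the lower bounds I would construct a continuously moving point set on which \emph{every} continuous output must at some moment incur that factor.

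\textbf{The \pc\ case ($\TS(\pc)=1$).} Here I expect the key observation to be that $f_{\pc}(\alpha,P)$, viewed as a function of the orientation $\alpha\in\orientationspace$ for a fixed $P$, is the sum of squared distances to the best-fit line of orientation $\alpha$, and by the standard principal-axis / spectral argument this equals a smooth (indeed, after the substitution to the double-angle parameter, essentially sinusoidal) function of $\alpha$ whose minimum value varies continuously with $P$. The point is that the \emph{set} of minimizers can jump, but the minimum value cannot; moreover there is always a global minimizer, and I claim one can select a minimizer continuously along any continuous path of point sets \emph{as long as the point set is not rotationally symmetric in the relevant second-moment sense}, and when it is symmetric every orientation is optimal so continuity is free. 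Formalizing "select a minimizer continuously" is the delicate part: I would argue that the optimum is attained at a zero of the derivative of the sinusoid, track this zero as $P$ moves, and show that when two such zeros would need to be swapped (a degenerate configuration where the covariance matrix is a multiple of the identity) the cost function is constant in $\alpha$, so the algorithm is free to reroute its output without any loss. This yields a continuous output of cost exactly equal to the optimum, so $\TS(\pc)\le 1$, and since the ratio is trivially at least $1$ we get equality.

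\textbf{The \obb\ and \strip\ cases.} For these I would use the same high-level recipe but now the optimum is genuinely unattainable by a continuous output, so the bounds are strictly above $1$. The upper bound algorithm I have in mind is again "track a continuously varying near-optimal orientation, and when forced to cross a bad region, cross it along the cheapest available route." For \strip, $f_{\strip}(\alpha,P)=w_{\alpha^\perp}(P)$ is the width of $P$ in direction $\alpha^\perp$; the worst case for a continuous output is when the optimal direction must sweep through roughly $90\degree$, and in the middle of that sweep the best the output can do on a point set that is "as square as possible" (e.g. the vertices of a square, or an equilateral-triangle-like configuration forcing width ratio $\sqrt2$) costs a factor $\sqrt2$ more than the optimal width. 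So the lower bound construction is: a point set that rotates its "thin direction" by $90\degree$ over $[0,T]$ while staying shaped so that at the midpoint the two candidate thin directions give widths differing by exactly $\sqrt2$; any continuous $\alpha(t)$ must at some $t$ be $45\degree$ from one of them, costing $\sqrt2\cdot\OPT$. The \obb\ case is analogous with the area product $w_\alpha w_{\alpha^\perp}$: here a square-like configuration gives $f_{\obb}$ maximized over $\alpha$ at $45\degree$ off-axis with ratio $5/4$ over the axis-aligned optimum (the $5/4$ comes out of comparing $1\cdot1$ to the projection of a unit square onto a $45\degree$-rotated frame, which gives side lengths $\tfrac{\sqrt2}{2}(1+?)$ — this is the routine calculation I would not grind through here, but it yields $5/4$), and a rotating construction forces any continuous output to pay it. For the matching upper bounds I would show that the tracking algorithm never needs to be worse than this worst "crossing point," i.e. between any two consecutive optimal-orientation jumps the cost of linearly interpolating the orientation is bounded by the same constant.

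\textbf{Main obstacle.} The hard part will be the upper-bound side: turning the intuitive "track the optimum and cross bad regions cheaply" into an algorithm that provably achieves the constant \emph{at all times} and for \emph{all} continuously moving point sets, including pathological ones where the optimal orientation is discontinuous on a dense set of times or oscillates rapidly. I expect this requires a careful argument that, for \obb\ and \strip, the function $\alpha\mapsto f_\Pi(\alpha,P)/\min_\gamma f_\Pi(\gamma,P)$ is bounded above by the claimed constant on a neighborhood of the optimal orientation of width at least some fixed angle, so that a state-aware output moving at unbounded (but finite) speed can always stay within that neighborhood except during brief forced crossings, and that even the crossings cost no more than the constant. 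Pinning down that "fixed-angle neighborhood" bound, and ruling out accumulation of bad events, is where the real work lies; the lower bounds, by contrast, are concrete rotating constructions whose analysis is essentially the single-configuration calculations sketched above.
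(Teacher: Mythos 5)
Your overall structure (upper bound via a state-aware algorithm that tracks the optimum and interpolates through flips, lower bound via a forcing construction) matches the paper, and your \pc\ argument is essentially the paper's: when the minimizer of $f_{\pc}(\cdot,P)$ is not unique, the second-moment tensor is isotropic and $f_{\pc}(\alpha,P)$ is constant in $\alpha$, so the transition between the two principal directions costs nothing. However, there are two concrete gaps in the \obb\ and \strip\ parts.

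First, and most importantly, your ``routine calculation'' for \obb\ is wrong. You propose that $5/4$ comes from rotating a unit square by $45\degree$; but the bounding box of a unit square at $45\degree$ has area $2$, giving ratio $2$, not $5/4$. The actual extremal configuration is \emph{not} a square. In the paper's lower-bound construction, the static points admit two minimum boxes of aspect ratio $2$ (a $1\times2$ box and its copy rotated by $2\arctan\frac{1}{2}\approx 53.13\degree$), and the worst intermediate orientation is $\arctan\frac{1}{2}$ (not $45\degree$), which yields area $\frac{5}{2} = \frac{5}{4}\cdot 2$. The matching upper bound is far from routine: it parametrizes the two coinciding optimal boxes by their side lengths and the angle between them, derives the area of the intermediate box as a function of the rotation angle $\theta$, optimizes over $\theta$ (the worst case is the bisector), splits into $\alpha\le\pi/4$ and $\alpha>\pi/4$, and in the second case still requires choosing the cheaper rotation direction and solving a constrained optimization. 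You should not treat this as a one-line computation, and as stated your guess at the extremal configuration would lead you to the wrong constant.

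Second, the ``main obstacle'' you identify --- pathological oscillation of the optimum, accumulation of flips, bounding the output on a whole neighbourhood of the optimum --- largely evaporates once you use the defining feature of topological stability: the output speed is unbounded. The algorithm can therefore sit exactly on the current optimal orientation at all non-flip times and, at a flip, rotate through the intermediate orientations ``instantaneously'' (i.e.\ over a vanishingly short interval during which $P$ is essentially frozen). This reduces the entire upper-bound analysis to a single static question: given one point set $P$ with two distinct optimal orientations, what is the worst cost ratio encountered while rotating from one to the other? That is the question the paper actually answers for each of \pc, \obb, and \strip. You should also sharpen the \strip\ lower bound: ``rotating the thin direction'' continuously by $90\degree$ lets a tracking algorithm follow it with ratio $1$; what is needed is a motion that swaps which of two fixed candidate orientations is thinnest, as in the paper's construction of a rectangle whose height shrinks through $1$, forcing any continuous output to pass through $45\degree$ when the figure is exactly a unit square.
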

To prove an upper bound on the topological stability ratio, we need an algorithm that produces an output that changes continuously, but may change arbitrarily fast. Our algorithm works as follows. Whenever an optimal solution undergoes a discrete change, the output moves continuously between the solution before and after the discrete change. When analyzing the topological stability ratio, we consider all intermediate solutions that the continuous output goes through. It is therefore sufficient to consider a single point in time where the optimal solution undergoes a discrete change to find an upper bound, since that is where the output differs most from an optimal solution. On the other hand, to prove a lower bound on the topological stability ratio, we must construct a full time-varying point set such that the corresponding approximation ratio must occur at some point in time during this motion, regardless of which algorithm is used. Note that a lower bound on the topological stability ratio is immediately also a lower bound on the $K$-Lipschitz stability ratio for any value of $K$.

\begin{figure}[ht]
    \centering
    \includegraphics{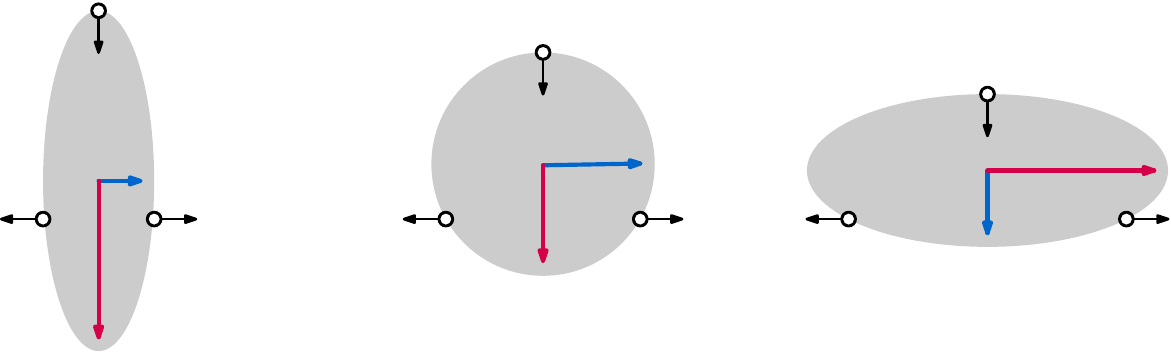}
    \caption{Moving points causing a flip between the first principal component (red) and second principal component (blue).}
    \label{fig:pc-topological}
\end{figure}

\begin{lemma}\label{lem:pc-topological}
$\TS(\pc) = 1$
\end{lemma}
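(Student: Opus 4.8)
The plan is to show $\TS(\pc) = 1$ by establishing the matching lower and upper bounds, with the upper bound being the substantive part. The lower bound $\TS(\pc) \ge 1$ is immediate: for any point set and any orientation, $f_{\pc}(\alpha, P) \ge \min_\gamma f_{\pc}(\gamma, P)$ by definition of the minimum, so the ratio can never drop below $1$. The real work is the upper bound $\TS(\pc) \le 1$, i.e.\ showing that a state-aware algorithm can output a continuously moving orientation that is in fact \emph{always optimal}. Following the proof strategy outlined just before the lemma, it suffices to show that the optimal orientation for $\pc$ never actually undergoes a discrete flip as the points move continuously --- or, more precisely, that even at moments where the first and second principal components could be exchanged, the cost $f_{\pc}$ is unaffected, so the continuous interpolation our algorithm performs incurs no loss.

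The key step is to understand $f_{\pc}(\alpha, P)$ analytically. First I would recall that $\min_{L \in \mathcal{L}(\alpha)} \sum_{p \in P} d(L,p)^2$ is minimized by the line through the centroid of $P$ with orientation $\alpha$, and that this quantity equals $\alpha^{\perp\,\top} C \alpha^\perp$ where $C$ is the (scaled) covariance matrix of the centered point set --- a positive semidefinite $2\times 2$ matrix depending continuously on $P(t)$. The orientation of $\pc$ is the eigenvector of $C$ associated with the \emph{larger} eigenvalue, and $f_{\pc}$ evaluated at the optimal orientation equals the \emph{smaller} eigenvalue $\lambda_{\min}(C)$. So I need an algorithm that outputs, continuously in $t$, a unit vector $v(t)$ with $v(t)^{\top} C(t) v(t) = \lambda_{\min}(C(t))$ at all times.

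The core idea is that $\lambda_{\min}(C(t))$ is itself a continuous function of $t$ (eigenvalues depend continuously on matrix entries), so the \emph{cost} of the optimal solution varies continuously --- there is no genuine discrete jump in quality, only a potential ambiguity in which direction is chosen when the two eigenvalues coincide (the point set is ``circular'' at that instant). At such a degenerate time, every orientation is simultaneously optimal, so the algorithm is free to rotate the output through whatever angle is needed to reconcile the optimal orientation just before with the one just after, all at cost exactly $\lambda_{\min} = \lambda_{\max}$; at all non-degenerate times the optimal orientation is locally unique and varies continuously, so the output simply tracks it. Assembling this into a continuous map $\mathcal{A}(P(t))$ requires a little care where a degenerate time is isolated versus forms an interval, but in either case the output can be defined continuously while remaining optimal, giving ratio $1$ at every $t$.

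The main obstacle I anticipate is making the ``reconcile the orientation through the degenerate instant'' argument fully rigorous as a construction of a single continuous $\orientationspace$-valued function on $[0,T]$, rather than just an instantaneous observation: one has to argue that the set of degenerate times is closed, that on each maximal open non-degenerate interval the optimal orientation extends to a continuous function, and that these pieces can be glued (using the freedom at degenerate times to absorb any mismatch) into a globally continuous output. Since at every degenerate time \emph{all} orientations achieve the optimum $\lambda_{\min}$, this gluing never costs anything, which is exactly why the bound is $1$ and not something larger. The contrast with \obb and \strip --- where the analogous ``degenerate'' configurations still have a unique or near-unique optimal orientation and hence a genuine flip that forces ratio $>1$ --- is worth flagging but not needed for this lemma.
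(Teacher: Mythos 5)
Your proposal is correct and follows essentially the same route as the paper: both arguments reduce $f_{\pc}$ to the quadratic form of the covariance matrix, use continuity of its eigenvalues to conclude that at any flip time the two competing orientations share the same eigenvalue (so the $2\times 2$ matrix is a multiple of the identity and every orientation is optimal), and therefore rotate through the flip at no cost. Your additional remarks on the trivial lower bound and on gluing the per-interval optimal orientations into one globally continuous output are a welcome bit of extra rigor, but not a different method.
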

\begin{proof}
Consider a time $t$ where the first principal component flips between two orientations, represented by unit vectors $\vec{v}_1$ and $\vec{v}_2$. The first principal component is the orientation of the line that minimizes the sum of squared distances between the points and the line, as described in Section~\ref{sec:intro}. It can be computed by centering the point set at the mean of the coordinates, computing the covariance matrix of the resulting point coordinates, and extracting the eigenvector of this matrix with the largest eigenvalue. Since eigenvalues change continuously if the data changes continuously~\cite[Theorem 3.9.1]{tyrtyshnikov2012brief}, both $\vec{v}_1$ and $\vec{v}_2$ must have some eigenvalue $\lambda^{*}$ at time $t$. But that means that every interpolated vector $\vec{v} = (1 - u) \vec{v}_1 + u \vec{v}_2$ also has eigenvalue $\lambda^{*}$, since $C \vec{v} = (1 - u) C \vec{v}_1 + u C \vec{v}_2 = (1 - u) \lambda^{*} \vec{v}_1 + u \lambda^{*}\vec{v}_2 = \lambda^{*} \vec{v}$. As a result, $f_\pc(\vec{v}) = f_\pc(\vec{v_1})$, and we can continuously change orientation from $\vec{v}_1$ to $\vec{v}_2$ without decreasing the quality of the shape descriptor.
\end{proof}

\begin{lemma}\label{lem:obb-topological-UB}
$\TS(\obb) \leq \frac{5}{4}$
\end{lemma}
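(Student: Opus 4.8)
The plan is to prove this bound constructively, via the kind of state-aware algorithm described above: whenever the optimal orientation of \obb{} flips between two orientations $\alpha_1$ and $\alpha_2$, the algorithm moves its output continuously from $\alpha_1$ to $\alpha_2$ through intermediate orientations, at the point set present at the flip. Since the output equals the optimum whenever the optimum moves continuously, it suffices to fix a point set $P$ at a flip time, let $Q$ be its convex hull (all that matters for $f_{\obb}$), and bound $\max_\gamma f_{\obb}(\gamma, Q) / \OPT$ over the orientations $\gamma$ on the rotation arc, where both endpoints satisfy $f_{\obb}(\alpha_1, Q) = f_{\obb}(\alpha_2, Q) = \OPT = \min_\gamma f_{\obb}(\gamma, Q)$.

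I would first normalize so that $\OPT = 1$, and use that $f_{\obb}$ is invariant under rotating the orientation by $\frac\pi2$ (which merely swaps the two side lengths of the box); hence in the space of boxes the two optimal orientations lie at angular distance at most $\frac\pi4$, and after rotating and relabeling I may assume the output rotates through $[0,\theta]$ with $\theta \le \frac\pi4$ and $f_{\obb}(0, Q) = f_{\obb}(\theta, Q) = 1$. Let $B_1$ and $B_2$ be the two optimal bounding boxes, i.e., rectangles of area $1$ at orientations $0$ and $\theta$. Because $Q$ must be flush with all four sides of each $B_i$ (otherwise $B_i$ would not be the minimum-area box at orientation $\alpha_i$), and $Q \subseteq B_1 \cap B_2$, the octagon $B_1 \cap B_2$ is also flush with all four sides of each $B_i$, so it has the same optimum $1$ attained at $0$ and $\theta$; and since the minimum bounding box of $Q$ at any orientation $\gamma$ is contained in that of $B_1 \cap B_2$, replacing $Q$ by $B_1 \cap B_2$ only increases the quantity we must bound. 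So it is enough to treat $Q = B_1 \cap B_2$.

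From here I would parametrize by $\theta$ and the aspect ratios of $B_1$ and $B_2$, and compute the extents $W_\gamma(B_1 \cap B_2)$ and $W_{\gamma^\perp}(B_1 \cap B_2)$ for $\gamma \in [0,\theta]$ in closed form: the point of $B_1 \cap B_2$ extreme in a given direction is either a corner of $B_1$, a corner of $B_2$, or an intersection point of a side of $B_1$ with a side of $B_2$, each given by an explicit expression, so $f_{\obb}(\gamma, B_1 \cap B_2)$ is a piecewise product of trigonometric functions of $\gamma$ whose maximum over $[0,\theta]$ can be written out. The final step is a constrained optimization: maximize this maximum over all $\theta \le \frac\pi4$ and all aspect ratios for which $B_1$ and $B_2$ are \emph{both} globally minimum-area bounding boxes of $B_1 \cap B_2$. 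It is precisely this last constraint that keeps the aspect ratios and $\theta$ from becoming too large and pins the value down to $\frac54$, attained by one specific configuration (which then also serves as the matching lower-bound instance for $\TS(\obb) = \frac54$). I expect this constrained optimization — together with the case distinction over which vertices of the octagon are extremal along the arc, and with verifying the global-minimality constraints — to be the main obstacle; the remainder is routine trigonometric bookkeeping.
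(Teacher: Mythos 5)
Your overall plan --- relax the convex hull to the intersection $B_1 \cap B_2$, compute the intermediate box area in closed form during a continuous rotation, and then run a constrained optimization over the configuration parameters --- is the same as the paper's (the paper reduces further to four specific intersection vertices to obtain a tractable closed form, and verifies the final optimization numerically). There is, however, a genuine gap in the step ``after rotating and relabeling I may assume the output rotates through $[0,\theta]$ with $\theta \le \frac\pi4$.'' The $\pi/2$-periodicity of $f_{\obb}$ indeed places the two optimal orientations within $\pi/4$ of each other in box-space, but those two points divide that circle into a short arc (length $\le \pi/4$) and a complementary long arc (length $\ge \pi/4$), and the algorithm is free to rotate through \emph{either}. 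It is not true that the short arc always has the smaller worst-case cost, and the paper is careful to take the minimum over the clockwise and counterclockwise options. Concretely, take both optimal boxes of area $1$ with sides $\sqrt{2}\times\frac{1}{\sqrt{2}}$ (so $a=b=\sqrt 2$, $ab=2$) and angle $\alpha$ slightly above $\pi/4$ between their major axes; one checks that both boxes are global minima for $B_1 \cap B_2$, so your constraint is met. The short arc has length $\pi/2-\alpha<\pi/4$ and, by the paper's formula, maximum cost $\frac{(1+ab)^2}{2ab(1+\sin\alpha)}=\frac{9}{4(1+\sin\alpha)}\approx 1.32>\frac54$ near $\alpha=\pi/4$, while the long arc (length $\alpha$) has maximum $\frac{(a+b)^2}{2ab(1+\cos\alpha)}=\frac{2}{1+\cos\alpha}\approx 1.17<\frac54$. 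So optimizing only over the short arc would return a value strictly above $\frac54$ (pushing the aspect ratio to its limit drives it toward $\sqrt 2$), not $\frac54$. To close the gap, replace ``rotate through the arc of length $\le\pi/4$'' by ``rotate through whichever of the two arcs has the smaller maximum,'' and carry the resulting $\min$ of the two closed-form maxima into the constrained optimization, exactly as the paper does.
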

\begin{proof}
Consider a time $t$ at which two distinct oriented bounding boxes $A$ and $B$ have minimum area; both are assumed to have the smallest area $1$ without loss of generality, as the problem at hand is invariant under scaling. 
At this time $t$ we continuously change the orientation of the box between that of $A$ and $B$ while making sure that the box still contains all points (see Figure~\ref{fig:obb-topological-UB}). The goal is to compute the maximal size of the intermediate box in the worst case. 
Note that we may rotate either clockwise or counterclockwise; we always choose the direction that minimizes this maximal intermediate size.

Let $a$ and $b$ denote the length of the major axes of $A$ and $B$ respectively. 
Let angle $\alpha$ denote the smallest angle between the orientations of the major axes. Note that $\alpha \in \{0, \pi/2\}$ leads to $A$ and $B$ being identical, and that our problem is invariant under rotation, reflection and translation.
We thus assume without loss of generality:
\begin{itemize}
    \item $b \geq a \geq 1$;
    \item $0 < \alpha < \pi/2$;
    \item $B$ is centered at the origin, and $A$ at $(dx,dy)$;
    \item the major axis of $A$ is horizontal;
    \item $\alpha$ describes a counterclockwise angle from the major axis of $A$ to the major axis of $B$.
\end{itemize}

\begin{figure}[b]
    \centering
    \includegraphics{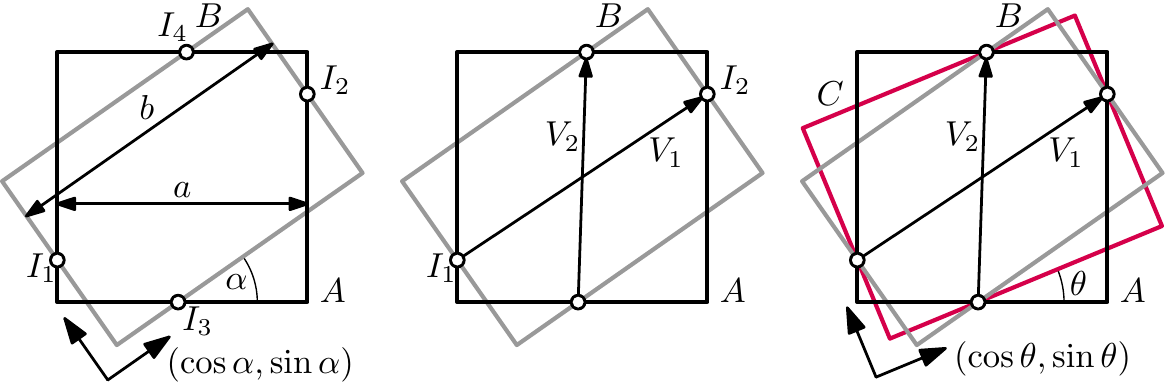}
    \caption{Construction of closed formula for area of intermediate solution C.}
    \label{fig:obb-topological-UB}
\end{figure}

The points $P$ must all be contained in the intersection of $A$ and $B$, for otherwise $A$ and $B$ would not be bounding boxes. Furthermore, no side of $A$ may be completely outside $B$ or vice versa, for otherwise one of the boxes could be made smaller. Thus, all sides intersect, and we are interested in four of these intersections $I_1, \ldots, I_4$ (see Figure~\ref{fig:obb-topological-UB}). Specifically, we want to use the intersections that allow us to derive a valid upper bound on the size of the bounding box during rotation. Since the intersections depend on the direction of rotation, we choose the intersections that allow us to rotate from $A$ to $B$ in counterclockwise direction.
The coordinates of the intersections can easily be computed (see Figure~\ref{fig:obb-topological-UB}). For example, for $I_1=(x_1,y_1)$ we solve for the following two equations: $x_1 = dx - a/2$ and $x_1 \cos{\alpha} + y_1\sin{\alpha} = -b/2$. The resulting coordinates of all intersections are:

\begin{itemize}
    \item $I_1 = (-\frac{a}{2}+dx, -\frac{b-(a-2dx)\cos{\alpha}}{2\sin{\alpha}})$
    \item $I_2 = (\frac{a}{2}+dx, \frac{b-(a+2dx)\cos{\alpha}}{2\sin{\alpha}})$
    \item $I_3 = ((-\frac{1}{2a}+dy)\frac{\cos{\alpha}}{\sin{\alpha}}+\frac{1}{2b\sin{\alpha}} ,-\frac{1}{2a}+dy)$
    \item $I_4 = ((\frac{1}{2a}+dy)\frac{\cos{\alpha}}{\sin{\alpha}}-\frac{1}{2b\sin{\alpha}} ,\frac{1}{2a}+dy)$
\end{itemize}

Now consider an intermediate box $C$ with angle $\theta \leq \alpha$ with respect to box $A$. Note that $C$ contains the intersection of $A$ and $B$ as long as it contains $I_1, \ldots, I_4$. We can define two vectors $V_1 = I_2 - I_1$ and $V_2 = I_4 - I_3$, which we project to lines at angle $\theta$ and $\theta+\frac{\pi}{2}$ to obtain the lengths of the sides of $C$. Note that $V_1$ and $V_2$ depend only on $a$, $b$ and $\alpha$, but not on $dx$ and $dy$. Thus, using $V_1$ and $V_2$ we can obtain a formula for the area of $C$, which we call $\mathcal{C}$. 
\begin{align*}
\mathcal{C}(a, b, \alpha, \theta) &= V_1 \cdot (\cos{\theta}, \sin{\theta}) \times V_2 \cdot (-\sin{\theta}, \cos{\theta}) \\
&= \frac{(b\sin(\alpha-\theta) + a\sin{\theta}) * (a\sin(\alpha-\theta) + b\sin{\theta})}{ ab\sin^2{\alpha}}
\end{align*}

We are now interested in the maximum of $\mathcal{C}$. We start by finding the partial derivative of $\mathcal{C}$ with respect to $\theta$:
\[\frac{\partial \mathcal{C}}{\partial \theta} = \frac{(a^2 + b^2 -2ab\cos{\alpha})\sin(\alpha - 2\theta)}{ab\sin^2{\alpha}}\]
First observe that $\frac{\partial \mathcal{C}}{\partial \theta} = 0$ if and only if $\theta = \alpha/2$ in the chosen domain, which implies that we can set $\theta = \alpha/2$. 
Using double-angle formulas, we may simplify $\mathcal{C}$ in this setting.

\begin{align*}
\mathcal{C}(a, b, \alpha, \alpha/2) &= \frac{(a+b)^2\sin^2(\alpha/2)}{ab\sin^2{\alpha}}\\
&=\frac{(a+b)^2\sin^2(\alpha/2)}{ab(2\sin(\alpha/2)\cos(\alpha/2))^2}\\
&=\frac{(a+b)^2}{2ab(2\cos^2(\alpha/2))}\\
&=\frac{(a+b)^2}{2ab(1+\cos(\alpha))}
\end{align*}

We now split the domain of $\alpha$ into $(0,\frac{\pi}{4}]$ and $(\frac{\pi}{4}, \frac{\pi}{2})$, and prove both cases separately.

In the first case, when $\alpha \in (0,\frac{\pi}{4}]$, let $c \geq 1$ be such that $b = c a$ (since $b \geq a$). 
As $b$ is at most the length of the diagonal of $A$, we get that $c \leq \sqrt{a^2 + (1/a)^2}/a = \sqrt{1 + 1/a^4} \leq \sqrt{2}$ (since $a \geq 1$). The resulting formula is $\mathcal{C}(a, c a, \alpha, \alpha/2) = \frac{(1+c)^2}{2c(1+\cos{\alpha})}$. This function is maximized when $c$ and $\alpha$ are maximized. We thus set $\alpha = \pi/4$ and $c = \sqrt{2}$ to obtain that $\mathcal{C}(1, \sqrt{2}, \pi/4, \pi/8) \leq \frac{1}{2} + \frac{1}{2}\sqrt{2} < \frac{5}{4}$ and thus this case meets the bound claimed.

What remains is to prove the case where $\alpha \in (\frac{\pi}{4}, \frac{\pi}{2})$. It might now be beneficial to rotate $A$ clockwise instead of counterclockwise, to align the minor axis of $A$ with the major axis of $B$: this clockwise rotation may result in smaller intermediate solutions $C$. 
Since $\mathcal{C}$ can only deal with counterclockwise rotation, we have to use different parameters to deal with the described situation. To simulate the clockwise rotation, we use $\mathcal{C}(1/a, b, \pi/2 - \alpha, \theta)$;  this reflects the whole setup over direction $\pi/4$ effectively considering the minor axis as the major axis instead. Note that we did not use the assumption that $a \geq 1$ anywhere above, until within the other case where $\alpha \leq \pi/4$. 
Note that $\frac{\partial \mathcal{C}}{\partial \theta} = 0$ depends on the parameters we fill in, hence we can set $\theta = (\pi/2 - \alpha)/2 = \pi/4 - \alpha/2$ to find a maximum in this case. 

For all possible values of $a$, $b$ and $\alpha$, we need to find the area of the largest intermediate box $C$. Since we can choose whether we rotate clockwise or counterclockwise, we find the area by taking the minimum of $\mathcal{C}(a, b, \alpha, \alpha/2)$ and $\mathcal{C}(1/a, b, \pi/2 - \alpha, \pi/4 - \alpha/2)$. 
We first simplify the latter.

\begin{align*}
\mathcal{C}(1/a, b, \pi/2 - \alpha, \pi/4 - \alpha/2) &= \frac{(\frac{1}{a}+b)^2}{2\frac{1}{a}b(1+\cos(\pi/2 - \alpha))}\\
&= \frac{a^2(\frac{1}{a}+b)^2}{2ab(1+\sin(\alpha))}\\
&= \frac{(1+ab)^2}{2ab(1 + \sin(\alpha))}\\
\end{align*}

To find the maximum of the function, we can use a mathematical program that looks for the values of $a,b$ and $\alpha$ that comply to a set of constraints and maximize a target function. All constraints come from the assumptions, but we add a final constraint similar to what we did in the $\alpha\in(0,\frac{\pi}{4}]$ case. To ensure that all 4 intersection points exist, the projection of the diagonal of $A$ to the major axis of $B$ should be larger than $b$. Hence we add the constraint $b\leq a\cos{\alpha} + \frac{1}{a}\sin{\alpha}$. The mathematical program now looks as follows:

\begin{align*}
    \text{\textbf{maximize}} \quad & \min\left( \frac{(a+b)^2}{2ab(1+\cos(\alpha))}, \frac{(1+ab)^2}{2ab(1 + \sin(\alpha))} \right) \\
    \text{\textbf{subject to}} \quad & 
        \begin{aligned}[t] b &\leq a\cos{\alpha} + \frac{1}{a}\sin{\alpha}\\
        \pi/4 < \alpha &< \pi/2\\
        1 \leq a &\leq b \\
        \end{aligned}
\end{align*}



Using the mathematical program we can verify that the area is at most $\frac{5}{4}$.
\end{proof}

\begin{figure}[ht]
	\centering
    \includegraphics{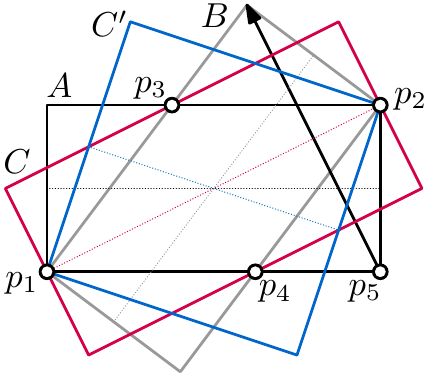}
    \caption{Moving points forcing a flip and a bounding box at least as big as $C$ or $C'$.}
    \label{fig:obb-topological-LB}
\end{figure}

\begin{lemma}\label{lem:obb-topological-LB}
$\TS(\obb) \geq \frac{5}{4}$
\end{lemma}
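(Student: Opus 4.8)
The plan is to realise the extremal case of Lemma~\ref{lem:obb-topological-UB} as an explicit motion. First I would fix a configuration of two minimum-area oriented bounding boxes $A$ and $B$, both of area $1$, at which the mathematical program is (essentially) tight: with $A$ axis-aligned and $B$ rotated by the optimal angle $\alpha^\ast\in(\tfrac\pi4,\tfrac\pi2)$, both expressions $\mathcal{C}(a^\ast,b^\ast,\alpha^\ast,\alpha^\ast/2)$ and $\mathcal{C}(1/a^\ast,b^\ast,\pi/2-\alpha^\ast,\pi/4-\alpha^\ast/2)$ equal $\tfrac54$, so that \emph{both} continuous rotations of the box from the orientation $\alpha_A$ of $A$ to the orientation $\alpha_B$ of $B$ are forced to pass through an enclosing box of area at least $\tfrac54$. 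Because $f_{\obb}(\alpha,P)=f_{\obb}(\alpha^\perp,P)$, the same holds (in both directions) for the rotations between $\alpha_A^\perp$ and $\alpha_B^\perp$.

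Next I would take, at a critical time $t^\ast$, the (at most eight) corners of $A\cap B$ as the point set $P(t^\ast)$; by the worst-case assumptions of Lemma~\ref{lem:obb-topological-UB} its convex hull is exactly $A\cap B$ and its minimum-area bounding boxes are precisely $A$ and $B$. Combining the previous paragraph with the intermediate-box estimate (the hull contains $I_1,\ldots,I_4$, so at every orientation the enclosing box is at least the box $\mathcal{C}$ needed for those intersections) shows that the orientations $\alpha$ with $f_{\obb}(\alpha,P(t^\ast))<\tfrac54$ form four disjoint open arcs $N_{\alpha_A},N_{\alpha_B},N_{\alpha_A^\perp},N_{\alpha_B^\perp}$ appearing in this cyclic order on the orientation circle (since $0<\alpha^\ast<\tfrac\pi2$), so the ``$A$-arcs'' $N_{\alpha_A},N_{\alpha_A^\perp}$ and the ``$B$-arcs'' $N_{\alpha_B},N_{\alpha_B^\perp}$ are disjoint and interleaved. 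For $t<t^\ast$ I would move a few of the corners out into $A\setminus B$ --- staying inside $A$ but leaving $B$ --- which keeps $A$ the unique minimum of area $1$ but grows box $B$ past $\tfrac54$, so $\alpha_B$ (and $\alpha_B^\perp$) become too costly; symmetrically, for $t>t^\ast$ I would move corners into $B\setminus A$. At all these times the convex hull still contains $A\cap B$, so every enclosing box is at least as large as at $t^\ast$ while the optimum stays $1$; hence the good set $G(t)=\{\alpha:f_{\obb}(\alpha,P(t))<\tfrac54\min_\gamma f_{\obb}(\gamma,P(t))\}$ stays inside $N_{\alpha_A}\cup N_{\alpha_B}\cup N_{\alpha_A^\perp}\cup N_{\alpha_B^\perp}$ for all $t$, with $G(0)\subseteq N_{\alpha_A}\cup N_{\alpha_A^\perp}$ and $G(T)\subseteq N_{\alpha_B}\cup N_{\alpha_B^\perp}$.

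The conclusion is then a connectedness argument. Suppose an algorithm maintained a continuous output $\alpha(t)$ with approximation ratio strictly below $\tfrac54$ throughout. Then $\alpha(t)\in G(t)$ for all $t$, so $\alpha([0,T])$ lies in the disjoint union of the four arcs; being a continuous image of an interval it lies within a single arc, yet $\alpha(0)$ lies in an $A$-arc and $\alpha(T)$ in a $B$-arc --- a contradiction. Hence every continuous-output algorithm incurs approximation ratio at least $\tfrac54$ somewhere on this motion, so $\TS(\obb)\geq\tfrac54$. If the mathematical program only approaches $\tfrac54$ without attaining it, I would run the same argument for a near-extremal configuration with threshold $\tfrac54-\varepsilon$ and let $\varepsilon\to0$.

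The main obstacle is the middle step: confirming that $G(t)$ stays inside those four fixed arcs for \emph{every} $t$, not merely near $t^\ast$. This requires controlling $f_{\obb}(\alpha,P(t))$ over \emph{all} orientations --- essentially re-running the intermediate-box estimate of Lemma~\ref{lem:obb-topological-UB} (including the ``long way'' around the orientation circle through $\alpha_A^\perp$ and $\alpha_B^\perp$) for the perturbed point sets --- and checking that moving corners into $A\setminus B$ (resp.\ $B\setminus A$) really keeps $A$ (resp.\ $B$) the unique optimum and never lets a forbidden orientation re-enter $G$. Pinning down a concrete extremal configuration (which, as in Lemma~\ref{lem:obb-topological-UB}, lies on the boundary $b=a\cos\alpha+\tfrac1a\sin\alpha$ and should therefore be realisable by the corners of $A\cap B$) is the concrete form this obstacle takes.
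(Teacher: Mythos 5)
Your plan follows the same high-level strategy as the paper's: realise the extremal configuration of Lemma~\ref{lem:obb-topological-UB} (where indeed $a=b=\sqrt2$, $\alpha=2\arctan\tfrac12$ lies on the boundary $b=a\cos\alpha+\tfrac1a\sin\alpha$ and both $\mathcal{C}$-expressions evaluate to $\tfrac54$), build a motion that forces the optimal orientation to cross from $A$ to $B$, and finish with a connectedness argument on the orientation circle. The picture of four interleaved arcs $N_{\alpha_A},N_{\alpha_B},N_{\alpha_A^\perp},N_{\alpha_B^\perp}$ separated by the orientations of $C$, $C'$ and their $\pi/2$-translates is exactly the right one.

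The step you flag as the ``main obstacle'' is, however, a genuine gap in your proposal, and the paper's construction is specifically engineered to sidestep it. Instead of taking all corners of $A\cap B$ and perturbing several of them, the paper fixes only \emph{four} static points --- two opposite corners of $A$ (which are also corners of $B$), plus one point on each of the remaining two sides --- and moves a \emph{single} extra point $p_5$ linearly from a corner of $A$ that lies outside $B$ to a corner of $B$ that lies outside $A$ (concretely $p_1=(0,0)$, $p_2=(2,1)$, $p_3=(0.75,1)$, $p_4=(1.25,0)$, with $p_5$ travelling from $(2,0)$ to $(1.2,1.6)$). This gives the two invariants you need for free: (i) the four static points alone already force the minimal box at orientation $C$, $C'$ (and their $\pi/2$-translates) to have area $\ge\tfrac52$ at every instant, so those four orientations are permanently excluded from $G(t)$; and (ii) $p_5(t)$ always lies in $A$ or in $B$, so $\min_\gamma f_{\obb}(\gamma,P(t))=2$ throughout, keeping the optimum constant. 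Your ``move a few corners into $A\setminus B$'' plan would still need to rule out that some orientation in a $B$-arc temporarily achieves ratio below $\tfrac54$ at intermediate times, and that the optimum does not drop below the box area at some third orientation --- both of which the paper's one-moving-point choice makes immediate. Since $p_5$'s start lies outside every box in the closed $B$-arc (and its end outside every box in the closed $A$-arc), $G(0)\subseteq N_{\alpha_A}\cup N_{\alpha_A^\perp}$ and $G(T)\subseteq N_{\alpha_B}\cup N_{\alpha_B^\perp}$, and your connectedness argument then closes the proof exactly as the paper's does.
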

\begin{proof}
Refer to Figure~\ref{fig:obb-topological-LB} for illustration.
Consider a point set $P$ with four static points $p_1 = (0,0)$, $p_2 = (2,1)$, $p_3 = (0.75,1)$ and $p_4 = (1.25,0)$, and point $p_5$ moving linearly from $(2,0)$ to $(1.2,1.6)$.
The static points allow two minimal bounding boxes of area $2$ and aspect ratio $2$: one with orientation $0$ (box $A$) and one with orientation $2 \arctan(\frac{1}{2}) \approx 53.13$ degrees (box $B$).
As $p_5$ is always in $A$ or $B$, one of these boxes is always optimal. Initially, only $A$ contains $p_5$ and in the end only box $B$ does.
The angles $\arctan(\frac{1}{2})$ (box $C$) and $\pi/4 - \arctan(\frac{1}{2})$ (box $C'$) give an intermediate box of size $2.5 = \frac{5}{4} \cdot 2$ on the static points. 
$C$ is encountered on a counterclockwise rotation from $A$ to $B$, and $C'$ on a clockwise rotation. 
Neither $C$, $C'$, nor any box rotated more towards $B$ contains the initial location of $p_5$. Similarly, neither $C$, $C'$, nor any box rotated more towards $A$ contains the final location.

To derive a contradiction, assume a continuously moving \obb exists that achieves a ratio strictly less that $\frac{5}{4}$.
This ratio implies that initially the \obb orientation is clockwise between $C$ and $C'$, and at the end of the motion it is not.
However, as the assumed \obb moves continuously through $\orientationspace$, it must at some point have been in the orientation of $C$ or $C'$.
But this implies a ratio of $\frac{5}{4}$, contradicting our assumption and proving the lower bound.
\end{proof}

\begin{lemma}\label{lem:strip-topological-UB}
$\TS(\strip) \leq \sqrt{2}$
\end{lemma}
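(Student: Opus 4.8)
The plan is to mirror the structure of the \obb upper bound proof (Lemma~\ref{lem:obb-topological-UB}): fix a time $t$ at which the optimal strip orientation flips between two orientations, and bound how wide an intermediate strip must be. Suppose the two optimal strips $A$ and $B$ both have width $1$ without loss of generality (scale invariance of $f_{\strip}$), and let $\alpha \in (0,\pi/2)$ be the angle between their orientations — again $\alpha \in \{0,\pi/2\}$ is degenerate. The point set $P$ lies in $A \cap B$, which is a parallelogram (the intersection of two unit-width slabs), and no bounding line of $A$ can lie entirely outside $B$ or vice versa, since then the corresponding strip could be narrowed. So we may assume $A \cap B$ is exactly this parallelogram and $P$ is contained in it; the worst case for an intermediate strip is when $P$ is the whole parallelogram, i.e.\ $P$ is its four corners.

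Next I would rotate the strip continuously from orientation of $A$ to that of $B$, choosing the shorter of the two rotation directions so the total turn is $\alpha \le \pi/2$, and at intermediate angle $\theta \in [0,\alpha]$ compute the width $w(\theta)$ needed to cover the parallelogram. This width is the extent of the four corners along the direction orthogonal to $\theta$, which by symmetry is governed by the longer diagonal of the parallelogram. A short computation with the slab geometry gives the two diagonals of the parallelogram as $1/\sin(\alpha/2)$ and $1/\cos(\alpha/2)$ (each slab has width $1$, the half-angles between the slab normals are $\alpha/2$ and $\pi/2 - \alpha/2$). The width needed at angle $\theta$ is then the projection of the relevant diagonal onto the normal of the strip, and maximizing over $\theta \in [0,\alpha]$ — taking a derivative analogous to $\partial\mathcal{C}/\partial\theta$ in the \obb case, which should again vanish at $\theta = \alpha/2$ — yields a clean closed form in $\alpha$. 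I expect this to evaluate to something like $w_{\max}(\alpha) = \bigl(\text{diagonal length}\bigr)\cdot\sin(\alpha/2)$ or a similar product, and then the worst case over $\alpha \in (0,\pi/2)$ of $w_{\max}(\alpha)/1$ should come out to $\sqrt{2}$, attained as $\alpha \to \pi/2$ where the parallelogram is a unit square with diagonal $\sqrt{2}$ and the best intermediate strip (at $45\degree$) has width exactly $\sqrt{2}$.

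The main obstacle I anticipate is the same subtlety that complicates the \obb case: one must be careful about which rotation direction to take and which pair of parallel bounding lines of $A$ and $B$ ``survive'' as constraints on the intermediate strip. For strips this is somewhat simpler because a strip is determined by a single width (one pair of parallel lines) rather than two, so there is no aspect-ratio parameter $c$ to track — the only parameters are $\alpha$ and $\theta$. Still, I would verify that choosing the rotation direction with total turn $\le \pi/2$ is always the better choice, and that at $\theta = \alpha/2$ we indeed get a maximum rather than a minimum of $w(\theta)$ on the interval. Once the closed form $w_{\max}(\alpha)$ is in hand, checking that it is increasing in $\alpha$ on $(0,\pi/2)$ and has supremum $\sqrt{2}$ is a routine single-variable calculation. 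A matching lower bound construction (a moving point forcing a flip through the $45\degree$ strip over a unit square, analogous to Lemma~\ref{lem:obb-topological-LB}) would establish $\TS(\strip) = \sqrt 2$, but the present lemma only asks for the upper bound.
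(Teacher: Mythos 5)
Your plan matches the paper's argument in all essentials: at a flip time the points lie in the rhombus $D = A \cap B$ of two unit-width strips, and rotating through the short angle $\alpha$ the intermediate width peaks at $\theta = \alpha/2$, where it equals the short diagonal $1/\cos(\alpha/2)$ of $D$, which is maximized at $\sqrt{2}$ when $\alpha = \pi/2$ (unit square). The paper states this fact about the diagonals of $D$ directly rather than carrying out the derivative computation you outline, but the content is the same; the only small inaccuracy in your sketch is the guessed closed form, which comes out to $1/\cos(\alpha/2)$ rather than $(\text{diagonal})\cdot\sin(\alpha/2)$.
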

\begin{proof}
Consider a time $t$ at which there are two thinnest strips $A$ and $B$ of width $1$ with different orientations. All points must be contained in the diamond-shaped intersection $D$ of $A$ and $B$ (see Figure~\ref{fig:strip-topological-UB}). If we continuously rotate a strip $C$ from $A$ to $B$, then at some point the width of $C$ must be at least the length of one of the diagonals of $D$. To maximize the length of the shortest diagonal, $D$ must be a square with side length $1$. Therefore, the width of $C$ is at most $\sqrt{2}$ during the rotation from $A$ to $B$.
\begin{figure}[ht]
\begin{minipage}[t]{0.3\linewidth}
    \centering
    \includegraphics{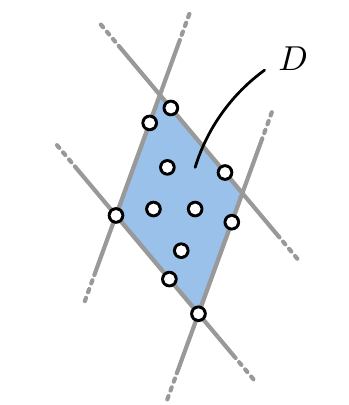}
    \caption{A configuration having two strips of minimal width and overlapping area $D$.}
    \label{fig:strip-topological-UB}
\end{minipage}
\hfill
\begin{minipage}[t]{0.65\linewidth}
    \centering
    \includegraphics{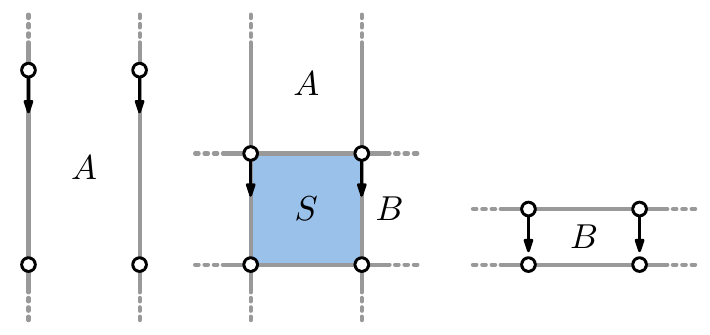}
    \caption{An instance of moving points where the thinnest strip changes orientations. The configuration that leads to the best intermediate solutions is shown in the middle.}
    \label{fig:strip-topological-LB}
\end{minipage}
\end{figure}
\end{proof}

\begin{lemma}\label{lem:strip-topological-LB}
$\TS(\strip) \geq \sqrt{2}$
\end{lemma}
\begin{proof}
Let $P$ consist of four points positioned in a unit square $S$. There are two thinnest strips $A$ and $B$ for $P$, each of which is oriented along a different pair of parallel edges of $S$ (see Figure~\ref{fig:strip-topological-LB}). If the orientation of a strip is $\pi/4$ away from $A$ and $B$, then its width is $\sqrt{2}$.

Now assume that the top points of $S$ are moving along the vertical sides of $S$, starting from high above. Clearly, at the start of this motion, any strip $C$ approximating the thinnest strip must align with $A$. At the end of this motion, when the top points align with the bottom points of $S$, the strip $C$ must align with $B$. Therefore, the strip $C$ must at some point make an angle of $\pi/4$ with $A$ and $B$. If $x$ is the distance between the top and bottom points of $S$, then the width of $C$ at this orientation is $(1 + x) \sqrt{2} / 2$. The minimal width is $\min(x, 1)$. It is easy to verify that this ratio is at least $\sqrt{2}$, which concludes the proof.
\end{proof}

\section{Lipschitz stability}
\label{sec:lipschitz}

To derive meaningful bounds on the Lipschitz stability ratio, we assume the points move with at most unit speed. Furthermore, the relation between distances and speeds in input and output space should be \emph{scale invariant}~\cite{meulemans2017framework}. This is currently not the case: if we scale the coordinates of the points, then the distances in the input space change accordingly, but the distances in the output space (between orientations) do not. To remedy this problem, we require that diameter $D$ of $P(t)$ is at least $1$ for every time $t$.
For \pc this is not sufficient, as we argue in Section~\ref{sec:lipschitzpc}.

To produce a $K$-Lipschitz stable solution we use a chasing algorithm similar to the generic algorithm introduced in Section~\ref{sec:intro}. The algorithm maintains a solution over time, and it can rotate towards the optimal solution at every point in time.
However, there are two differences from the generic algorithm. First, instead of chasing the orientation of an optimal solution \obb, we chase the orientation of the diametrical pair. Although chasing an optimal shape descriptor would be better in general, chasing the diametrical pair is easier to analyze and sufficient to obtain an upper bound on the $K$-Lipschitz stability ratio for \obb. Second, $K$-Lipschitz stability enforces a maximum speed at which the algorithm can move towards the solution we are chasing. This speed bound depends on parameter $K$ and on how quickly the input changes -- faster moving points require faster rotation to achieve the same ratio. Since we assume the points move at unit speed, the $K$-Lipschitz stable solution changes with at most $K$ radians per time step.

\subsection{Chasing the diametrical pair}
We denote the orientation of the diametrical pair as $\alpha = \alpha(t)$ and the diameter as $D = D(t) \geq 1$. Furthermore, let $W = W(t)$ be the width of the thinnest strip with orientation $\alpha(t)$ covering all points in $P(t)$, and let $z = z(t) = W(t) / D(t)$ be the \emph{aspect ratio} of the \emph{diametric box} with orientation $\alpha(t)$.  Finally, our chasing algorithm has orientation $\beta = \beta(t)$ and can change orientation with at most constant speed $K$. We generally omit the dependence on $t$ if $t$ is clear from the context.

\subparagraph{Approach.}
The main goal is to keep chasing orientation $\beta$ as close as possible to orientation $\alpha$ of an optimal diameter box, specifically within a sufficiently small interval around $\alpha$. The challenge lies with the discrete flips of $\alpha$. We must argue that, although flips happen instantaneously, a short time span does not admit many flips over a large angle in the same direction; otherwise we can never keep $\beta$ close to $\alpha$ with a bounded speed. Furthermore, the size of the interval must depend on the aspect ratio $z$, since if $z = 0$, the interval around $\alpha$ must have zero size as well to guarantee a bounded approximation ratio.

For the analysis we introduce three functions depending on $z$: $T(z)$, $H(z)$, and $J(z)$. Function $H(z)$ defines an interval $[\alpha - H(z), \alpha + H(z)]$ called the \emph{safe zone}. We aim to show that, if $\beta$ leaves the safe zone at some time $t$, it must return to the safe zone within the time interval $(t, t + T(z)]$. We also define a larger interval $I = [\alpha - H(z) - J(z), \alpha + H(z) + J(z)]$. We refer to the parts of $I$ outside of the safe zone as the \emph{danger zone}. Figure~\ref{fig:interval} shows $I$ at time $t$ and time $t+\epsilon$. Although $\beta$ may momentarily end up in the danger zone due to discontinuous changes, it must quickly find its way back to the safe zone. We aim to guarantee that $\beta$ stays within $I$ at all times. 
\begin{figure}[ht]
	\centering
    \includegraphics{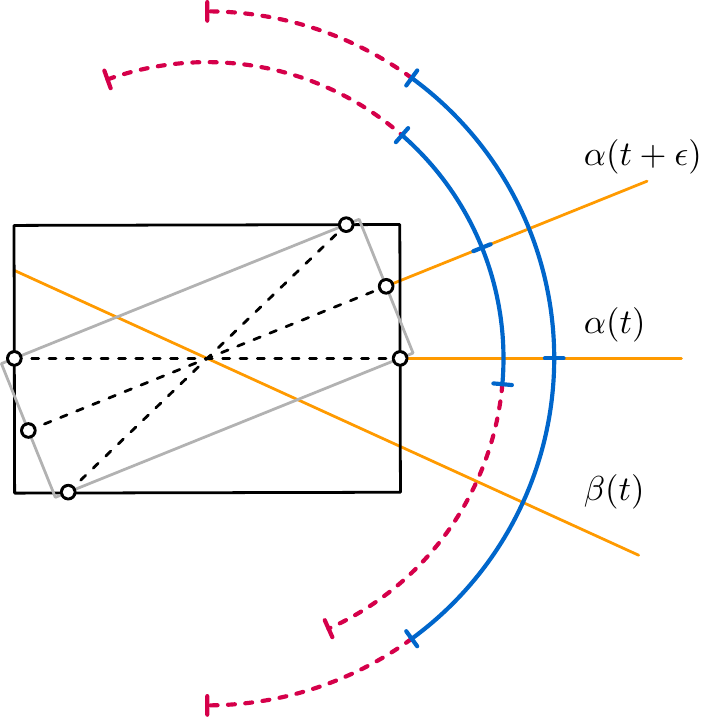}
    \caption{Intervals at time $t$ (outer) and time $t+\epsilon$ (inner). The safe and danger zones are indicated in blue and dashed red respectively; orientations are shown in yellow.}
    \label{fig:interval}
\end{figure}
Let $E = E(t)$ refer to an endpoint of $I$. We call $J(z)$ the \emph{jumping distance} and require that $J(z)$ upper bounds how far $E$ can ``jump'' instantaneously. Let $\Delta E(z, \Delta t)$ denote how far $E$ moves over a time period of length $\Delta t$, starting with a diameter box of aspect ratio $z$. We then require that $\Delta E(z, 0) \leq J(z)$. Note that by defining this upper bound, $J(z)$ is defined recursively through $E$, since by definition $\Delta E(z, \Delta t)$ is upper bounded by how much $\alpha$, $H(z)$ and $J(z)$ change over time $\Delta t$. Hence we need to carefully choose the right function for $J(z)$. For the other functions we choose $T(z) = z/4$ and $H(z) = c \arcsin(z)$ for a constant $c$ (chosen later).

\subparagraph{Changes in orientation and aspect ratio.}
To verify that the chosen functions $T(z)$ and $H(z)$ satisfy the intended requirements, and to define the function $J(z)$, we need to bound how much $\alpha$ and $z$ can change over a time period of length $\Delta t$. We refer to these bounds as $\Delta \alpha(z, \Delta t)$ and $\Delta z(z, \Delta t)$, respectively. Note that, since the diameter can change discontinuously, we generally have that $\Delta \alpha(z, 0) > 0$ and $\Delta z(z, 0) > 0$.

\begin{lemma}\label{lem:flipangle}
$\Delta\alpha(z, \Delta t) \leq \arcsin(z + \Delta t (2 + 2z))$ for $\Delta t\leq (1 - z) / (2 + 2z)$.
\end{lemma}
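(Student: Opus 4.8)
The plan is to bound how much the diametrical orientation $\alpha$ can move over a time period $\Delta t$, as a function of the current aspect ratio $z$. I would think of the diametrical pair as a pair of points $p, q$ realizing the diameter $D$, together with the thinnest strip of width $W = zD$ with orientation $\alpha$ covering all points. The key geometric constraint is that every point of $P$ lies inside the \emph{diametric box}: the axis-aligned (w.r.t.\ $\alpha$) rectangle of length $D$ and width $W$. In particular, the farthest a point can be from the line through $p$ and $q$ in the perpendicular direction is $W = zD \le z \cdot D$. First I would reduce to the worst case $D = 1$ (or at least use $D \ge 1$): since points move with unit speed, a larger diameter only makes the orientation harder to perturb, so $D = 1$ is extremal. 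Then the slope of the segment $pq$ relative to any other candidate diameter segment is controlled by an $\arcsin$ of (perpendicular displacement)/(length), which is where the $\arcsin(z)$ baseline comes from — even with no motion, a discontinuous flip of the diametrical pair can swing $\alpha$ by as much as $\arcsin(z)$, because the new pair can be the two "corners" of the diametric box on opposite short sides.

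**Static term versus motion term.**
So I would split $\Delta\alpha(z,\Delta t)$ into the instantaneous (flip) part and the part accumulated by continuous motion. For the flip part: at a single instant, any pair of points that could both be diametrical must both have length $\ge D$ wait—actually both have length exactly $D$ up to the tie; more carefully, a competing pair achieving the diameter must have its endpoints within the diametric box, so the angle between the old and new diametrical directions is at most $\arcsin(W/D) = \arcsin(z)$ (two unit-length-ish chords of the box, maximal angular spread when endpoints sit at opposite short edges). This gives the $z$ inside the $\arcsin$. For the motion part: over time $\Delta t$ each point moves at most $\Delta t$, so the perpendicular extent $W$ grows by at most $2\Delta t$ (both the strip's defining points can move toward opposite sides) and the length $D$ shrinks by at most $2\Delta t$; also the diametrical pair's own endpoints move by at most $\Delta t$ each, tilting the segment by an additional $\arcsin(2\Delta t / D)$-type term. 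Combining, the argument of the $\arcsin$ should become roughly $z + 2\Delta t + (\text{terms} \propto z\Delta t)$, and bookkeeping the factors of $D = 1$ versus the shrinking lower bound on $D$ is what produces the exact coefficient $2 + 2z$: the $2\Delta t$ from strip-width growth, plus $2z\Delta t$ from coupling the width change with the diameter change (since $W = zD$ and $D$ can only shrink to $1 - 2\Delta t$, we get $W/D \le (z + 2\Delta t)/(1 - 2\Delta t)$, and linearizing this bound, valid precisely when $\Delta t \le (1-z)/(2+2z)$ so the denominator stays positive and the linearization is an over-estimate).

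**Assembling the bound and the domain restriction.**
Concretely I would: (1) fix time $t$ with aspect ratio $z$ and normalize $D(t) = 1$ (noting $D \ge 1$ only helps); (2) at time $t + \Delta t$, bound the new diameter from below, $D(t+\Delta t) \ge 1 - 2\Delta t$; (3) bound the new minimal strip width in the new diametrical direction from above using that all points stayed within $2\Delta t$ of the old diametric box, getting $W(t+\Delta t) \le z + 2\Delta t$ (up to lower-order, after accounting for the direction change); (4) conclude the new aspect ratio satisfies $z(t+\Delta t) \le (z + 2\Delta t)/(1 - 2\Delta t) \le z + \Delta t(2 + 2z)$ where the last inequality holds for $\Delta t \le (1-z)/(2+2z)$; (5) translate an aspect-ratio bound into an orientation bound via the same $\arcsin$ estimate as in the static case, so that $\Delta\alpha(z,\Delta t) \le \arcsin(z(t+\Delta t)) \le \arcsin(z + \Delta t(2+2z))$. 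The condition $\Delta t \le (1-z)/(2+2z)$ is exactly what makes $z + \Delta t(2+2z) \le 1$, so that $\arcsin$ is well-defined and the diametric box hasn't degenerated.

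**Main obstacle.**
The delicate step is (3)–(5): carefully arguing that the orientation change over $\Delta t$ is really captured by the \emph{single} $\arcsin$ of the \emph{new} aspect ratio, rather than the sum of a static $\arcsin(z)$ plus a separate motion term. The point is that the worst case is a flip to a new diametrical pair \emph{after} the motion, so the new pair's endpoints lie in the (slightly enlarged) box and the bound $\arcsin(W(t+\Delta t)/D(t+\Delta t))$ already absorbs everything — one must resist the temptation to double-count the motion both in widening the box and in tilting the original segment. Getting the constants in $W(t+\Delta t) \le z + 2\Delta t$ exactly right (why $2\Delta t$ and not $4\Delta t$) requires a precise argument about which points define the new thinnest strip and how far they can have drifted relative to the new diametrical line; that is the calculation I would be most careful with, but it is routine once the geometric picture of "all points confined to a growing box, flip happens at the end" is pinned down.
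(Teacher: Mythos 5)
Your geometric setup is correct and matches the paper's: track where the new diametrical pair $(p'_1,p'_2)$ was at time $t$, note it lay in the old diametric box so its perpendicular extent w.r.t.\ the old orientation $\alpha$ was at most $zD$, and bound how much that extent (and the new diameter $D'$) can change over $\Delta t$. But there is a genuine gap in how you combine the two constraints. You decouple them: you bound $D' \geq 1 - 2\Delta t$ on one side and the perpendicular extent by $z + 2\Delta t$ on the other, then divide, obtaining $\sin\gamma \leq (z + 2\Delta t)/(1 - 2\Delta t)$. Your step (4) then asserts $(z + 2\Delta t)/(1 - 2\Delta t) \le z + \Delta t(2 + 2z)$ for $\Delta t \leq (1-z)/(2+2z)$, claiming the linearization is an over-estimate. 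This is false, and in fact reversed: the function $g(\Delta t) = (z+2\Delta t)/(1-2\Delta t)$ has $g(0)=z$, $g'(0)=2+2z$, and $g'' > 0$ on the relevant range, so $g(\Delta t) \geq z + (2+2z)\Delta t$, with strict inequality for $\Delta t > 0$. Numerically, at $z=\tfrac12$, $\Delta t = \tfrac18$ (within your domain since $\tfrac18 < \tfrac16$), the left side is $1$ and the right side is $\tfrac78$. So your derivation proves the strictly weaker bound $\arcsin\bigl((z+2\Delta t)/(1-2\Delta t)\bigr)$, not the lemma.

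The paper closes exactly this gap by a joint optimization over $D'$ rather than using the two constraints independently. It observes that both $\Delta t \geq (D - D')/2$ and $\Delta t \geq \tfrac{D'}{2}\sin\gamma - \tfrac{zD}{2}$ must hold simultaneously; the first is decreasing in $D'$ and the second increasing, so the adversary minimizing $\Delta t$ for a given $\gamma$ must balance them. Equalizing the two gives $D' = D(1+z)/(1+\sin\gamma)$ and hence $\Delta t \geq D'(\sin\gamma - z)/(2+2z) \geq (\sin\gamma - z)/(2+2z)$, which rearranges to the claimed $\arcsin(z + \Delta t(2+2z))$. Intuitively, the same motion budget of $\Delta t$ per endpoint cannot be fully spent both on shortening $p'_1p'_2$ and on spreading it perpendicular to $\alpha$; your decoupled version charges the budget twice, which is precisely where the slack arises. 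Separately, your step (5), phrased as $\Delta\alpha(z,\Delta t) \leq \arcsin(z(t+\Delta t))$, is also not quite the right statement: $\Delta\alpha$ compares the orientation at time $t$ to the orientation at time $t+\Delta t$, whereas an $\arcsin$ of the aspect ratio at time $t+\Delta t$ only compares segments living at that single time; the perpendicular extent must be measured relative to the \emph{old} orientation $\alpha(t)$, as the paper does, rather than repackaged as a new aspect ratio.
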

\begin{proof}
Refer to Figure~\ref{fig:maxangle} for illustrations.
Let $D$ be the diameter at time $t$; the width of the strip containing all points is $z D$. Also, let $D'$ be the diameter at time $t + \Delta t$, and let $(p'_1, p'_2)$ be the diametrical pair at that time, such that the diametrical orientation differs an angle $\gamma$ from the orientation at time $t$. Note that $\Delta t \geq |D - D'|/2$. Both $p'_1$ and $p'_2$ must have been in the diametric box at time $t$. This means that $\Delta t \geq \frac{D'}{2} \sin(\gamma) - \frac{z D}{2}$. As $\Delta t$ is minimized when $D \geq D'$, we can obtain a lower bound for $\Delta t$ by equalizing $(D - D')/2 = \frac{D'}{2} \sin(\gamma) - \frac{z D}{2}$. We obtain that $\Delta t \geq \frac{D' (\sin(\gamma) - z)}{2 + 2z} \geq \frac{\sin(\gamma) - z}{2 + 2z}$. This is equivalent to $\gamma \leq \arcsin(z + \Delta t (2 + 2z))$. Note that this function is only well-defined for $\Delta t \leq \frac{1 - z}{2 + 2z}$.
\begin{figure}[b]
    \centering
    \includegraphics{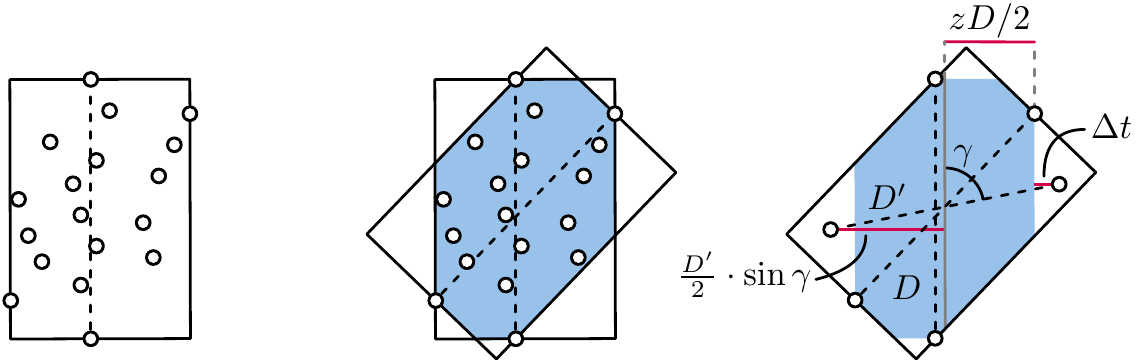}
    \caption{A diametrical box with dimensions $D$, $z D$ containing all points. If the orientation of the diametrical pair changes, all points must lie in the blue area. The orientation can change  further in the same direction, after some points move outside of the blue area and establish a new diameter.}
    \label{fig:maxangle}
\end{figure}
\end{proof}

\begin{lemma}\label{lem:flipaspectratio}
$\Delta z(z, \Delta t) \leq z - \frac{\sin(\frac{1}{2}\arcsin(z)) - 2 \Delta t}{1 + 2 \Delta t}$ for $\Delta t \leq \sin(\frac{1}{2}\arcsin(z))/2$.
\end{lemma}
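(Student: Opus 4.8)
The plan is to bound how far the aspect ratio can \emph{decrease} over a time span $\Delta t$, since it is a shrinking $z$ that endangers the safe zone $H(z)=c\arcsin(z)$. Fix a time $t$ and write $D=D(t)$, $\alpha=\alpha(t)$. Recall that $P(t)$ lies in the diametric box of dimensions $D\times zD$ oriented along $\alpha$; this box contains a pair of points exactly $D$ apart along $\alpha$ (the diametrical pair) and a pair of points exactly $zD$ apart along $\alpha^{\perp}$ (realizing the box height). Let $\alpha'=\alpha(t+\Delta t)$ be the new diametrical orientation, $\gamma\in[0,\pi/2]$ the angle between $\alpha$ and $\alpha'$, $D'=D(t+\Delta t)$, and $W'=w_{\alpha'^{\perp}}(P(t+\Delta t))$, so that $z(t+\Delta t)=W'/D'$. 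I would lower bound $W'$ and upper bound $D'$, and then combine them.

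The heart of the argument is a purely geometric claim about $P(t)$ alone: for \emph{every} direction $\alpha'$ at angle $\gamma$ from $\alpha$,
\[
w_{\alpha'^{\perp}}(P(t)) \;\geq\; D\sin\!\left(\tfrac12\arcsin z\right).
\]
To prove it I would project the relevant points onto $\alpha'^{\perp}$. The diametrical pair, being exactly $D$ apart along $\alpha$, projects to two values $D\sin\gamma$ apart, so $w_{\alpha'^{\perp}}(P(t))\geq D\sin\gamma$. Next, let $p^{+},p^{-}\in P(t)$ realize the box height. Since the diameter is $D$ we have $|p^{+}-p^{-}|\leq D$, so their separation along $\alpha$ is at most $D\sqrt{1-z^{2}}$; computing the two projections onto $\alpha'^{\perp}$ and using $\sin\gamma\geq 0$ gives $w_{\alpha'^{\perp}}(P(t))\geq D\bigl(z\cos\gamma-\sqrt{1-z^{2}}\sin\gamma\bigr)$, and writing $z=\sin(\arcsin z)$ this right-hand side equals $D\sin(\arcsin z-\gamma)$. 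Hence $w_{\alpha'^{\perp}}(P(t))\geq D\max\bigl(\sin\gamma,\sin(\arcsin z-\gamma)\bigr)$, and since $\sin$ is increasing on $[0,\pi/2]$ this maximum is minimized precisely when the two arguments coincide, i.e.\ at $\gamma=\tfrac12\arcsin z$, yielding the claim; the cases $\gamma=0$ and $\gamma$ near $\pi/2$ are comfortably covered since there $w_{\alpha'^{\perp}}(P(t))$ is $\geq zD$ or $\geq D\sin\gamma$ respectively, both at least $D\sin(\tfrac12\arcsin z)$.

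To finish, note that each point moves at most $\Delta t$, so the extent along any fixed direction changes by at most $2\Delta t$; thus $W'\geq D\sin(\tfrac12\arcsin z)-2\Delta t$ and $D'\leq D+2\Delta t$, giving $z(t+\Delta t)\geq \frac{D\sin(\frac12\arcsin z)-2\Delta t}{D+2\Delta t}$. A routine manipulation (cross-multiplying against $\frac{\sin(\frac12\arcsin z)-2\Delta t}{1+2\Delta t}$; the difference of numerators factors as $2\Delta t\,(D-1)\,(\sin(\tfrac12\arcsin z)+1)\geq 0$ using $D\geq 1$) shows $z(t+\Delta t)\geq\frac{\sin(\frac12\arcsin z)-2\Delta t}{1+2\Delta t}$; the hypothesis $\Delta t\leq\sin(\tfrac12\arcsin z)/2$ is exactly what keeps this numerator nonnegative so the bound is meaningful, and what makes the cross-multiplication valid since then both sides are nonnegative. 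Rearranging $z-z(t+\Delta t)$ then gives the stated bound on $\Delta z(z,\Delta t)$.

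The only nonroutine step is the geometric claim, and that is where I expect the main difficulty: one has to recognize that the worst case balances the diametrical pair (which forces width $\geq D\sin\gamma$, bad for small $\gamma$) against the height-realizing pair (which forces width $\geq D\sin(\arcsin z-\gamma)$, bad for large $\gamma$), and that the diameter bound $|p^{+}-p^{-}|\leq D$ is precisely what caps the along-$\alpha$ offset of the height-realizing points at $D\sqrt{1-z^{2}}$ --- the quantity that makes $z\cos\gamma-\sqrt{1-z^{2}}\sin\gamma$ collapse to $\sin(\arcsin z-\gamma)$ and lets the two bounds trade off cleanly. Everything after that --- the movement estimates and the final inequality --- is bookkeeping, and notably the argument never needs a bound on $\gamma$ such as Lemma~\ref{lem:flipangle}.
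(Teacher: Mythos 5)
Your proposal is correct and follows essentially the same route as the paper: both arguments reduce to showing that the four extreme points (the diametrical pair and the width-realizing pair) force the extent of $P(t)$ perpendicular to any direction to be at least $D\sin(\tfrac{1}{2}\arcsin z)$, with the worst case at the bisecting angle, and then apply the $2\Delta t$ movement bounds and $D\geq 1$. Your derivation of that geometric fact via direct projections is a slightly more explicit rendering of the paper's argument about the thinnest strip bisecting the angle between $L_p$ and $L_q$, but the decomposition and the key trade-off are identical.
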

\begin{proof}
Refer to Figure~\ref{fig:flipaspectratio} for illustrations. 
Let diameter $D$ at time $t$ be realized by the pair of points $(p_1, p_2)$ with orientation $\alpha$. The width of the diametric box is determined by two points $q_1$ and $q_2$; the distance between $q_1$ and $q_2$ is at most $D$. 
To minimize the aspect ratio of the diametric box at time $t + \Delta t$, we need to find a thinnest strip that contains all of $p_1$, $p_2$, $q_1$, and $q_2$. 
For the thinnest strip, all four points are on the boundary of the strip in the worst case, and we assume w.l.o.g. that $p_1$ and $q_1$ are on the same side of the strip (same for $p_2$ and $q_2$). Consider the following lines: $L$ oriented in the orientation of the thinnest strip (parallel to its boundary), $L_p$ spanned by $p_1p_2$ and $L_q$ spanned by $q_1q_2$. Let the angle between $L_p$ and $L_q$ be $\gamma$, $\gamma \geq \arcsin(z)$. 
The distance between $q_1$ and $q_2$ is then $z D / \sin(\gamma)$. 
We denote the angle between $L$ and $L_p$ by $\gamma_p$, and between $L$ and $L_q$ by $\gamma_q$.
We observe that $\gamma_p + \gamma_q = \gamma$, as the orientation of $L$ must bisect the $\gamma$ angle for the strip to be thinnest. 
The width of the strip is $\max(D \sin(\gamma_p), z D \sin(\gamma_q) / \sin(\gamma))$. We show that this width is at least $D \sin(\frac{1}{2}\arcsin(z))$. This is clearly the case if $\gamma_p \geq \frac{1}{2}\arcsin(z)$, so assume the contrary. Since the function $\sin(\gamma - \gamma_p) / \sin(\gamma)$ is increasing, it is optimal to set $\gamma = \arcsin(z)$. But then $z D \sin(\gamma_q) / \sin(\gamma) = D \sin(\gamma_q) > D \sin(\frac{1}{2}\arcsin(z))$. Thus, the width of the thinnest strip is at least $D \sin(\frac{1}{2}\arcsin(z))$.
As a result, the width of the diametric box at time $t + \Delta t$ is at least $D \sin(\frac{1}{2}\arcsin(z)) - 2 \Delta t$. Since the diameter $D' \leq D + 2 \Delta t$, the final aspect ratio is $z' \geq (D \sin(\frac{1}{2}\arcsin(z)) - 2 \Delta t) / (D + 2 \Delta t)$. Since $D \geq 1$, we obtain that $\Delta z(z, \Delta t) \leq z - (\sin(\frac{1}{2}\arcsin(z)) - 2 \Delta t) / (1 + 2 \Delta t)$. Note that this bound is meaningful only for $\Delta t \leq \sin(\frac{1}{2}\arcsin(z))/2$.
\end{proof}
\begin{figure}[ht]
    \centering
    \includegraphics{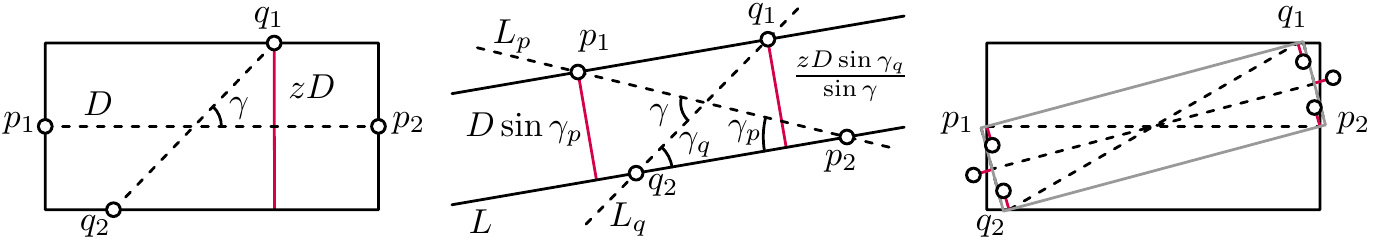}
    \caption{Illustrations for the proof of Lemma \ref{lem:flipaspectratio}. (Left) A diametrical box with aspect ratio $z$ and points $p_1,p_2,q_1$ and $q_2$ at the boundary. (Middle) A strip with points located at the boundary, the width of the strip is the maximum of the red lines. (Right) The smallest diametrical box for time $t+\epsilon$ and in red the distance the points can travel in $\Delta t$ to further shrink the aspect ratio.}
    \label{fig:flipaspectratio}
\end{figure}

\subparagraph{Jumping distance.}
We can now derive a valid function for $J(z)$. Recall that we require that $J(z)$ is at least the amount $\Delta E(z, \Delta t)$ that $E$ can move in $\Delta t = 0$ time. 

\begin{lemma}\label{lem:jumping}
$J(z) = (c+2)\arcsin(z)$ is a valid jumping distance function.
\end{lemma}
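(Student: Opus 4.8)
The plan is to bound how far an endpoint $E$ of the interval $I$ can move \emph{toward} $\alpha$ in zero time, since it is this contracting motion that threatens to leave the chasing orientation $\beta$ stranded outside $I$; an instantaneous jump of $E$ away from $\alpha$ merely enlarges $I$ and is harmless. Write $E = \alpha + H(z) + J(z)$ for the upper endpoint (the lower endpoint is symmetric). An instantaneous change of $E$ toward $\alpha$ is caused by the combination of two effects: $\alpha$ jumping toward the opposite endpoint, and the half-width $H(z) + J(z)$ shrinking because the aspect ratio $z$ drops to some $z'$. Hence $\Delta E(z, 0)$ is at most $\Delta\alpha(z, 0)$ plus the largest possible decrease of $H(z) + J(z)$.

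For the first term, Lemma~\ref{lem:flipangle} with $\Delta t = 0$ gives $\Delta\alpha(z, 0) \leq \arcsin(z)$ (its side condition holds trivially at $\Delta t = 0$, as $z \leq 1$). For the second term, note that $H(z) + J(z) = c\arcsin(z) + (c+2)\arcsin(z) = (2c+2)\arcsin(z)$ is increasing in $z$, so it shrinks most when $z$ drops as far as possible; by Lemma~\ref{lem:flipaspectratio} with $\Delta t = 0$ (again the side condition holds), the aspect ratio after the jump satisfies $z' \geq \sin(\frac{1}{2}\arcsin(z))$, hence $\arcsin(z') \geq \frac{1}{2}\arcsin(z)$, so the decrease of $H(z) + J(z)$ is at most $(2c+2)(\arcsin(z) - \arcsin(z')) \leq (c+1)\arcsin(z)$.

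Adding the two contributions yields $\Delta E(z, 0) \leq \arcsin(z) + (c+1)\arcsin(z) = (c+2)\arcsin(z) = J(z)$, which is exactly the requirement for a valid jumping distance function. I would also note that this computation pins down the choice of $J$: for any ansatz $J(z) = j\arcsin(z)$ the same argument gives $\Delta E(z,0) \leq \arcsin(z) + \frac{1}{2}(c+j)\arcsin(z)$, and the requirement $\Delta E(z,0) \leq J(z)$ becomes $1 + \frac{1}{2}(c+j) \leq j$, i.e. $j \geq c+2$; thus $j = c+2$ is the smallest admissible value and makes the recursive definition of $J$ through $E$ consistent.

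The main obstacle is conceptual rather than computational. One must recognize that the ``jump'' relevant to the invariant ``$\beta$ stays in $I$'' is only the component of $E$'s motion that contracts $I$ --- so Lemmas~\ref{lem:flipangle} and~\ref{lem:flipaspectratio} must be used through their lower bound on $z'$, and a jump of $\alpha$ toward $E$ (which can move $E$ outward by more than $J(z)$) is irrelevant. One must also check that the apparently circular dependence, with $J$ appearing inside $E$ while being required to dominate the motion of $E$, actually closes; it does, precisely because both $H$ and $J$ are proportional to $\arcsin(z)$ and the coefficients balance.
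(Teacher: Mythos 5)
Your proof is correct and uses the same decomposition as the paper: bound $\Delta E(z,0)$ by $\Delta\alpha(z,0) \leq \arcsin(z)$ plus the shrinkage of the half-width $H(z)+J(z) = (2c+2)\arcsin(z)$ under the aspect-ratio drop $z \mapsto z' \geq \sin(\tfrac{1}{2}\arcsin(z))$, giving $(c+2)\arcsin(z) = J(z)$. The paper's own proof keeps $J$ abstract one step longer, reducing the check to $J(\sin(\tfrac{1}{2}\arcsin(z))) \geq (1+c/2)\arcsin(z)$ and then verifying it, but the arithmetic is identical; your added remark that $j=c+2$ is the minimal admissible coefficient and your explicit discussion of why only inward motion of $E$ matters are welcome clarifications, not deviations.
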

\begin{proof}
By Lemma~\ref{lem:flipangle} and Lemma~\ref{lem:flipaspectratio} we get that $\Delta E(z, 0) \leq \Delta \alpha(z, 0) + H(z) - H(z - \Delta z (z, 0)) + J(z) - J(z - \Delta z (z, 0))$. Since $\Delta \alpha(z, 0) \leq \arcsin(z)$ and $\Delta z (z, 0) \leq z - \sin(\frac{1}{2}\arcsin(z))$, we get after simplification that $\Delta E(z, 0) \leq (1 + c/2) \arcsin(z) + J(z) - J(\sin(\frac{1}{2}\arcsin(z)))$. Since we require that $J(z) \geq \Delta E(z, 0)$, it suffices to show that the following holds: $J(\sin(\frac{1}{2}\arcsin(z))) \geq (1 + c/2) \arcsin(z)$. Using the provided function, we get that $J(\sin(\frac{1}{2}\arcsin(z))) = (c + 2)\arcsin(z)/2$ as required, so the provided function is a valid jumping distance function.
\end{proof}

\begin{corollary}\label{cor:interval}
If $\beta$ is in $I$, then $|\alpha - \beta| \leq (2 c + 2)\arcsin(z)$.
\end{corollary}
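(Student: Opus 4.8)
The plan is to unfold the definition of the interval $I$ and substitute the concrete formulas chosen for $H(z)$ and $J(z)$. Recall from the description of the approach that the interval is $I = [\alpha - H(z) - J(z),\, \alpha + H(z) + J(z)]$, so that membership $\beta \in I$ is, by definition, equivalent to the bound $|\alpha - \beta| \leq H(z) + J(z)$. Hence the entire statement reduces to evaluating this sum.

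With the choice $H(z) = c\arcsin(z)$ fixed in the setup and the valid jumping distance function $J(z) = (c+2)\arcsin(z)$ established in Lemma~\ref{lem:jumping}, we obtain
\[
H(z) + J(z) = c\arcsin(z) + (c+2)\arcsin(z) = (2c+2)\arcsin(z),
\]
which is exactly the claimed upper bound on $|\alpha - \beta|$.

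Since nothing beyond recalling the definition of $I$ and invoking Lemma~\ref{lem:jumping} is required, there is no genuine obstacle here: the corollary simply records, as a single convenient inequality, the total half-width of the interval $I$ expressed in terms of the aspect ratio $z$. Its purpose is to be reused downstream, where the deviation $|\alpha - \beta|$ of the chasing algorithm from the diametrical orientation must be translated into a bound on the approximation ratio of the shape descriptor.
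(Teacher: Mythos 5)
Your proof is correct and is the intended one: the paper states the corollary without proof precisely because it is an immediate consequence of the definition of $I$, the choice $H(z) = c\arcsin(z)$, and $J(z) = (c+2)\arcsin(z)$ from Lemma~\ref{lem:jumping}. Nothing more is required.
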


\subparagraph{Bounding the speed.}
To show that the orientation $\beta$ stays within the interval $I$, we argue that over a time period of $T(z)$ we can rotate $\beta$ at least as far as $E$. As the endpoint of the safe zone moves at most as fast as $E$, this implies that if $\beta$ leaves the safe zone at time $t$, it returns to it in the time period $(t, t + T(z)]$. Thus we require that $K T(z) \geq \Delta E(z, T(z))$, as $\beta$ can rotate at most $K$ units per time step when the points move at unit speed. We need to keep up only when the safe zone does not span all orientations, that is, the above inequality must hold only when $H(z) \leq \pi/2$ or $z \leq \sin(\frac{\pi}{2 c})$. For the following speed bound we choose a specific value $c = 3$.\footnote{We could have set $c = 3$ earlier and simplified some of the earlier analysis. We did not do so in order to demonstrate the general technique more clearly, rather than just specifically for this problem.} Hence we only need to chase $\alpha$ when $z \leq \sin(\frac{\pi}{6}) = \frac{1}{2}$.

In our proofs we use the following trigonometric inequalities.

\begin{lemma}\label{lem:trigineq}
The following inequalities hold for $0 \leq x \leq 1$:
\begin{enumerate}[noitemsep,topsep=0pt]
\item $\sin(\lambda \arcsin(x)) \leq \lambda x$ for $\lambda \geq 1$
\item $\sin(\lambda \arcsin(x)) \geq \lambda x$ for $0 < \lambda \leq 1$.
\end{enumerate}
\end{lemma}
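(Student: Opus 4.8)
The plan is to prove each inequality by reducing it to a monotonicity statement about a single-variable function on $[0,1]$, or equivalently by a substitution that turns it into a standard concavity/convexity fact about the sine function. Since $\arcsin$ is a bijection from $[0,1]$ to $[0,\pi/2]$, I would substitute $\theta = \arcsin(x)$, so that $x = \sin\theta$ with $\theta \in [0, \pi/2]$, and the two claims become: (1) $\sin(\lambda\theta) \leq \lambda\sin\theta$ for $\lambda \geq 1$ and $\theta \in [0,\pi/2]$ with $\lambda\theta$ also in a range where this makes sense, and (2) $\sin(\lambda\theta) \geq \lambda\sin\theta$ for $0 < \lambda \leq 1$ and $\theta \in [0,\pi/2]$.

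For part (2), the cleanest route is concavity of $\sin$ on $[0,\pi]$: since $\lambda\theta = \lambda\theta + (1-\lambda)\cdot 0$ is a convex combination of $\theta$ and $0$, concavity gives $\sin(\lambda\theta) \geq \lambda\sin\theta + (1-\lambda)\sin 0 = \lambda\sin\theta$. This requires $\theta \in [0,\pi]$, which holds since $\theta \leq \pi/2$. For part (1), I would fix $\theta$ and consider $g(\lambda) = \lambda\sin\theta - \sin(\lambda\theta)$ for $\lambda \geq 1$; then $g(1) = 0$ and $g'(\lambda) = \sin\theta - \theta\cos(\lambda\theta)$. One needs $g'(\lambda) \geq 0$, i.e. $\sin\theta \geq \theta\cos(\lambda\theta)$; for $\lambda\theta \leq \pi/2$ this follows from $\cos(\lambda\theta) \leq 1$ and $\sin\theta \geq \theta\cos\theta$... actually more carefully, from $\sin\theta/\theta \geq \cos(\lambda\theta)$ since $\sin\theta/\theta \geq \cos\theta \geq \cos(\lambda\theta)$ when $\lambda \geq 1$ keeps $\lambda\theta$ in $[0,\pi/2]$; for larger $\lambda\theta$ the sine on the left side, $\sin(\lambda\theta)$, never exceeds $1 \leq \lambda \leq \lambda\sin\theta/\sin\theta$... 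I would instead argue directly: $\sin(\lambda\theta) \leq 1$ always, and I must show $1 \leq \lambda\sin\theta$ is not needed in general — so the bound is really only interesting for $\lambda\theta \leq \pi/2$, and there the derivative argument above closes it. Alternatively, a clean unified argument for (1): the function $h(u) = \sin u / u$ is decreasing on $(0,\pi]$, so for $\lambda \geq 1$ we have $\sin(\lambda\theta)/(\lambda\theta) \leq \sin\theta/\theta$ when $\lambda\theta \leq \pi$, which rearranges to $\sin(\lambda\theta) \leq \lambda\sin\theta$; and when $\lambda\theta > \pi$ either $\sin(\lambda\theta) \leq 0 \leq \lambda\sin\theta$ or we are outside the relevant regime. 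Since the lemma is stated for $0 \leq x \leq 1$ and used only where the safe-zone arguments keep the arguments bounded, I would phrase the monotonicity-of-$\sin u / u$ argument and note it covers the range of interest.

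The main obstacle, such as it is, is bookkeeping about the range of $\lambda\theta$: the inequality $\sin(\lambda\theta) \leq \lambda\sin\theta$ is false for some $\lambda\theta$ in, say, $(\pi, 2\pi)$ where $\sin(\lambda\theta)$ is negative — wait, negative is fine since the RHS is nonnegative — the genuinely delicate range is near $\lambda\theta = \pi/2$ to $\pi$ where $\sin(\lambda\theta)$ is still close to $1$. There the cleanest safeguard is the decreasing-$\sin u/u$ fact on $(0,\pi]$ plus the observation $\sin(\lambda\theta) \leq 1$. I would therefore structure the write-up as: substitute $x = \sin\theta$; for (2) invoke concavity of $\sin$ on $[0,\pi]$; for (1) invoke that $u \mapsto \sin u/u$ is decreasing on $(0,\pi]$ (a standard fact, provable by $\frac{d}{du}(\sin u/u) = (u\cos u - \sin u)/u^2$ and $u\cos u \leq \sin u$ which itself follows from $\tan u \geq u$), handling $\lambda\theta > \pi$ trivially since then $\sin(\lambda\theta) \leq \lambda\theta/\pi \cdot$ nothing — rather, just that this regime does not arise for the $\lambda$ values used in subsequent lemmas. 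This keeps the proof to a few lines of elementary calculus with no surprises.
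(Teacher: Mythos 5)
Your treatment of part (2) via concavity of $\sin$ on $[0,\pi]$ is correct, and is in fact cleaner than the paper's proof, which handles (2) by a somewhat opaque substitution ($x = \sin(\tfrac{1}{\lambda}\arcsin(y))$) that reduces it to an instance of part (1). Your $\sin u/u$-monotonicity route for part (1) is also fine on $\lambda\theta \in (0,\pi]$ and is a reasonable alternative to the paper's argument, which instead compares derivatives of $\sin(\lambda y)$ and $\lambda\sin y$ on $[0,\pi/\lambda]$ and then extends pointwise to $[\pi/(2\lambda),\pi/2]$ using $\lambda\sin y \geq 1 \geq \sin(\lambda y)$ there.

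However, your handling of the regime $\lambda\theta > \pi$ in part (1) has a genuine gap. You offer two escape routes and neither works as stated: the claim ``$\sin(\lambda\theta)\leq 0$'' is false for, e.g., $\lambda\theta \in (2\pi, 3\pi)$, and the fallback ``this regime does not arise for the $\lambda$ values used in subsequent lemmas'' is both not a proof and not true --- Lemma~\ref{lem:obblipschitz} invokes this inequality with $\lambda = 2c+3 = 9$ and no upper bound on $z$ short of $1$, so $\lambda\arcsin(z)$ can be as large as $9\pi/2$. The fix is small and you should just supply it: for $\theta\in(0,\pi/2]$ with $\lambda\theta>\pi$, Jordan's inequality gives $\lambda\sin\theta \geq \tfrac{2}{\pi}\lambda\theta > 2 > 1 \geq \sin(\lambda\theta)$, closing the case. (Equivalently, since $\lambda\sin\theta$ is increasing in $\lambda$, evaluate at $\lambda = \pi/\theta$, where the $\sin u/u$ bound already gives $\lambda\sin\theta \geq \pi\cdot\sin\theta/\theta \geq 2$.) With that one line added, your proof of (1) is complete and matches the paper's in strength.
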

\begin{proof}
We first show inequality (1). Let $x = \sin(y)$. We rewrite (1) into $\sin(\lambda y) \leq \lambda \sin(y)$. The derivative with respect of $y$ is $\lambda \cos(\lambda y)$ for the left side and $\lambda \cos(y)$ for the right side. Since $\cos(y) \geq \cos(\lambda y)$ for $0 \leq y \leq \pi/\lambda$ and $\lambda \geq 1$, we get that $\sin(\lambda y) \leq \lambda \sin(y)$ for $0 \leq y \leq \pi/\lambda$. In particular, for $y = \pi / (2 \lambda)$ we get that $1 = \sin(\lambda y) \leq \lambda \sin(y)$. Since $\sin(\lambda y) \leq 1$ and $\lambda \sin(y)$ attains it first maximum at $y = \pi/2$, we thus also get that $\sin(\lambda y) \leq \lambda \sin(y)$ for $0 \leq y \leq \pi/2$. Since $x = \sin(y)$ and $\sin(\pi/2) = 1$, the result follows. 

For inequality (2), set $x = \sin(\frac{1}{\lambda}\arcsin(y))$. We obtain that $y \geq \lambda \sin(\frac{1}{\lambda}\arcsin(y))$, or $\frac{1}{\lambda} \geq \sin(\frac{1}{\lambda}\arcsin(y))$. As shown above, this inequality holds for $0 \leq y \leq 1$. Since $y = \sin(\lambda \pi / 2) < 1$ implies $x = \sin(\frac{1}{\lambda}\arcsin(y)) = 1$, the inequality holds for $0 \leq x \leq 1$.
\end{proof}

\begin{lemma}\label{lem:arcsin}
$x \leq \arcsin(x) \leq \frac{\arcsin(a)}{a} x$ for $0 \leq a \leq 1$ and $0 \leq x \leq a$.
\end{lemma}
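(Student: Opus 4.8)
The plan is to treat the two inequalities separately, each via an elementary fact about $\arcsin$ on $[0,1]$.

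For the lower bound $x \leq \arcsin(x)$, I would look at $\varphi(x) = \arcsin(x) - x$. We have $\varphi(0) = 0$ and $\varphi'(x) = \frac{1}{\sqrt{1-x^2}} - 1 \geq 0$ for all $x \in [0,1)$, so $\varphi$ is non-decreasing on $[0,1]$ and hence $\varphi(x) \geq 0$ there. Since $[0,a] \subseteq [0,1]$, this gives $x \leq \arcsin(x)$ for every $x$ in the claimed range.

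For the upper bound the key observation is that $\arcsin$ is \emph{convex} on $[0,1]$: its second derivative equals $\frac{x}{(1-x^2)^{3/2}}$, which is non-negative for $x \in [0,1)$. Combined with $\arcsin(0) = 0$, convexity means the secant line through $(0,0)$ and $(a, \arcsin(a))$ — that is, $y = \frac{\arcsin(a)}{a}\,x$ — lies on or above the graph of $\arcsin$ over $[0,a]$. Concretely, for $x \in [0,a]$ with $a > 0$ write $x = (1-t)\cdot 0 + t\cdot a$ where $t = x/a \in [0,1]$; then convexity yields $\arcsin(x) \leq (1-t)\arcsin(0) + t\arcsin(a) = \frac{\arcsin(a)}{a}\,x$. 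Equivalently, one may note that $x \mapsto \arcsin(x)/x$, being the slope of the chord from the origin of a convex function through the origin, is non-decreasing on $(0,1]$, so $\arcsin(x)/x \leq \arcsin(a)/a$ for $0 < x \leq a$.

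There is no substantive obstacle here; the only point requiring a little care is the degenerate situation $x = 0$ (or $a = 0$), where the quotients are ill-defined — but then both sides of the relevant inequality are $0$, so the bound holds trivially and one simply invokes the convexity/monotonicity formulation rather than dividing by $x$.
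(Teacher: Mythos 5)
Your proof is correct and takes essentially the same route as the paper: the upper bound is the same convexity-plus-secant argument, and your lower bound via monotonicity of $\arcsin(x) - x$ is just a small rephrasing of the paper's tangent-line-at-the-origin argument (both boil down to $\arcsin'(x) \geq 1$). Your extra care about the degenerate cases $x=0$ and $a=0$ is a nice touch but not a substantive departure.
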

\begin{proof}
First note that $\arcsin(x)$ is a convex function for $0 \leq x \leq 1$. Since the derivative of $x$ and $\arcsin(x)$ is $1$ at $x = 0$, this directly implies that $x \leq \arcsin(x)$. Furthermore, since $\arcsin(x) = \frac{\arcsin(a)}{a} x$ for $x = 0$ and $x = a$, the convexity of $\arcsin(x)$ also directly implies the second inequality.
\end{proof}

\begin{lemma}\label{lem:speedbound}
If $K \geq 43$, then $|\beta(t) - \alpha(t)| \leq 8 \arcsin(z)$ (using $c = 3$) for all $t$.
\end{lemma}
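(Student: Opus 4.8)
The plan is to combine the three ingredients already in place — the jumping bound $J(z)=(c+2)\arcsin(z)$ from Lemma~\ref{lem:jumping}, the ``return within $T(z)$'' promise built into the definition of the safe zone, and the speed requirement $K\,T(z)\ge \Delta E(z,T(z))$ — into an invariant argument showing that $\beta$ never leaves the interval $I$. Once $\beta\in I$ is an invariant, Corollary~\ref{cor:interval} with $c=3$ gives $|\alpha-\beta|\le (2c+2)\arcsin(z)=8\arcsin(z)$, which is exactly the claimed bound; so the whole lemma reduces to verifying the speed inequality for $K=43$.

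First I would set up the bootstrapping/invariant structure. Assume inductively (in the sense of a ``first violation'' argument over the continuous timeline) that $\beta$ has stayed in $I$ up to time $t$. Two things can disturb this: the instantaneous jumps of the endpoint $E$, which by Lemma~\ref{lem:jumping} move $E$ by at most $J(z)$ and hence can push $\beta$ from the safe zone only into the danger zone, never out of $I$; and the continuous drift of $E$ while $\beta$ is chasing. For the latter, I would invoke the design goal stated before Lemma~\ref{lem:flipangle}: if $\beta$ is outside the safe zone at time $t$, then over the window $(t,t+T(z)]$ the endpoint $E$ moves at most $\Delta E(z,T(z))$, while $\beta$ moves $K\,T(z)$; if $K\,T(z)\ge \Delta E(z,T(z))$ then $\beta$ re-enters the safe zone before $E$ can escape $I$, and since $\beta$ starts each such excursion inside $I$ (it only entered the danger zone via a jump of size $\le J(z)$ from the safe-zone boundary) it stays in $I$ throughout. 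One subtlety to handle explicitly: $z$ itself changes over the window, so ``$T(z)$'' and ``$\Delta E(z,T(z))$'' should be read with the worst-case $z$; here the observation right before the statement — that we only need to chase when $z\le\frac12$ — lets us restrict all estimates to $z\in[0,\tfrac12]$, and the monotone behaviour of the relevant trig functions on that range keeps the bounds clean.

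Next I would discharge the speed inequality $K\,T(z)\ge \Delta E(z,T(z))$ with $T(z)=z/4$ and $c=3$. Unfolding $\Delta E$ as in the proof of Lemma~\ref{lem:jumping} but now at time horizon $\Delta t=T(z)$ rather than $0$, we get $\Delta E(z,T(z))\le \Delta\alpha(z,T(z)) + \bigl(H(z)-H(z-\Delta z(z,T(z)))\bigr) + \bigl(J(z)-J(z-\Delta z(z,T(z)))\bigr)$. Then I would plug in Lemma~\ref{lem:flipangle} for $\Delta\alpha$, Lemma~\ref{lem:flipaspectratio} for $\Delta z$, and bound each $\arcsin$ term up and down using Lemma~\ref{lem:arcsin} (with $a=\tfrac12$) and Lemma~\ref{lem:trigineq} to trade $\sin(\lambda\arcsin(\cdot))$ against linear expressions. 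After substituting $c=3$ (so $H=3\arcsin z$, $J=5\arcsin z$) and $T(z)=z/4$, the inequality becomes a concrete one-variable inequality in $z\in(0,\tfrac12]$; dividing through by $z$ and using $x\le\arcsin x\le \tfrac{\arcsin(1/2)}{1/2}x=\tfrac{\pi}{3}x$ reduces it to comparing two constants, and the constant on the right is below $43$. I would also separately check the degenerate regime where the window $T(z)$ exceeds the validity range of Lemmas~\ref{lem:flipangle}–\ref{lem:flipaspectratio} (i.e. when $z/4 > (1-z)/(2+2z)$ or $z/4 > \sin(\tfrac12\arcsin z)/2$); on $z\le\tfrac12$ a quick estimate shows $z/4$ is comfortably inside both ranges, so this case does not actually arise.

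The main obstacle I expect is the bookkeeping in the second step rather than any conceptual difficulty: $J$ is defined recursively through $E$, so one has to be careful that the horizon-$T(z)$ version of $\Delta E$ is still controlled by the same $J$ (it is, because $J$ was chosen to dominate the $\Delta t=0$ jump, and the extra drift over $T(z)$ is precisely what the speed budget $K\,T(z)$ is there to absorb), and that the worst case over a time window is attained at its endpoints. Getting the constant down to the stated $43$ requires using the restriction $z\le\tfrac12$ tightly — a cruder bound valid for all $z\in[0,1]$ would give a larger constant — so the cleanest route is to carry the factor $\arcsin(1/2)/(1/2)=\pi/3$ from Lemma~\ref{lem:arcsin} through every step and only numerically evaluate at the very end.
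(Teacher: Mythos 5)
Your overall strategy matches the paper's: set up a ``first violation'' invariant, show that once $\beta$ leaves the safe zone it returns within $T(z)$ by checking $K\,T(z)\ge\Delta E(z,T(z))$ with $c=3$, $T(z)=z/4$, $z\le\tfrac12$, and then invoke Corollary~\ref{cor:interval}. Your discussion of the validity ranges for Lemmas~\ref{lem:flipangle}--\ref{lem:flipaspectratio} and the use of Lemma~\ref{lem:arcsin} with $a=\tfrac12$ also mirrors the paper.

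However, there is a genuine gap in how you argue that $\beta$ never leaves $I$ during the chasing window. You assert that because $\beta$ can only enter the danger zone via a jump of size at most $J(z)$, and because $K\,T(z)\ge\Delta E(z,T(z))$, $\beta$ ``stays in $I$ throughout.'' These two facts are anchors at $\Delta t=0$ and $\Delta t=T(z)$ only; they do not control what happens at intermediate times. During $(t,t+\Delta t]$ with $0<\Delta t<T(z)$, the endpoint $E$ may have drifted (or jumped again) by up to $\Delta E(z,\Delta t)$ while $\beta$ has covered only $K\Delta t$, and if at some intermediate $\Delta t$ we had $\Delta E(z,\Delta t)-K\Delta t>J(z)$ then $\beta$ would exit $I$ even though the inequality at the endpoints holds. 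Since $\Delta E(z,\cdot)$ is not known to be concave, you cannot interpolate between the two anchors. The paper therefore proves a separate pointwise inequality, $K\Delta t\ge\Delta E(z,\Delta t)-J(z)$ for all $\Delta t\in[0,T(z)]$, and this requires nontrivial work: a monotonicity argument showing $\partial(\Delta E(z,\Delta t)-J(z))/\partial z\ge0$ (so the worst case is $z=\tfrac12$), followed by bounding the $\Delta t$-derivative of $\Delta E(\tfrac12,\Dt)$ by $43$ on $[0,\tfrac18]$. Your plan omits this entire second verification, which is where most of the technical effort in the lemma actually lies.
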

\begin{proof}
Consider a time $t$ when $\beta(t)$ leaves the safe zone. We first argue that $\beta(t')$ will be in the safe zone at some time $t' \in (t, t + T(z)]$. To show this, we need to prove that $K T(z) \geq \Delta E(z, T(z))$ for $z \leq \frac{1}{2}$. 

To apply the bounds of Lemmata~\ref{lem:flipangle} and \ref{lem:flipaspectratio}, we must ensure that $T(z) = z/4$ satisfies the bounds for $\Delta t$. For Lemma~\ref{lem:flipangle}, observe that $(1 - z)/(2 + 2z)$ is decreasing and $z/4$ is increasing, and $(1 - z)/(2 + 2z) = \frac{1}{6} \geq z/4$ for $z = \frac{1}{2}$. For Lemma~\ref{lem:flipaspectratio} we apply Lemma~\ref{lem:trigineq} to show that $\sin(\frac{1}{2}\arcsin(z))/2 \geq z/4$. We thus get the provided bounds on $\Delta \alpha(z, T(z))$ and $\Delta z(z, T(z))$, and as a result a bound on $\Delta E(z, T(z))$. In particular, for $\Dt \leq T(z)$, we get:
\begin{align*}
\Delta E(z, \Dt) &\leq \Delta \alpha(z, \Dt) + H(z) - H(z - \Delta z(z, \Dt)) + J(z) - J(z - \Delta z(z, \Dt))\\
&= \arcsin(z + \Dt(2+2z)) + 8 \arcsin(z) - 8 \arcsin\left(\frac{\sin(\frac{1}{2}\arcsin(z)) - 2 \Dt}{1 + 2 \Dt}\right).
\end{align*}
We have that $z \leq \frac{1}{2}$, $\Dt \leq z/4 \leq \frac{1}{8}$ and $z + \Dt(2+2z) \leq \frac{7}{8}$. Then, using the inequalities of Lemma~\ref{lem:arcsin}, where $2\arcsin(\frac{1}{2}) \leq 1.05$ and $\frac{8}{7}\arcsin(\frac{7}{8}) \leq 1.22$, and using Lemma~\ref{lem:trigineq}, we get: 
\begin{align*}
\Delta E(z, \Dt) &\leq 1.22 (z + \Dt(2+2z)) + 8.4 z - \frac{4 z - 16 \Dt} {1 + 2 \Dt}\\
&\leq 9.62 z + 1.22 \Dt (2+2z),
\end{align*}
where the last inequality uses the fact that $\Dt \leq z/4$, and thus $16 \Dt \leq 4 z$. Finally, filling in $\Dt = T(z) = z/4$, we get:
\begin{align*}
\Delta E(z, T(z)) &\leq 10.23 z + 0.61 z^2 &\\
&\leq 10.6 z &\text{for } z \leq \frac{1}{2}\\
&\leq K \frac{z}{4} &\text{for } K \geq 43.
\end{align*}
Finally, we need to argue that $\beta(t)$ does not leave $I$ in the interval $(t, t + T(z)]$. To show this, we need to prove that $K \Delta t \geq \Delta E(z, \Delta t) - J(z)$ for all $\Delta t \in [0, T(z)]$. Using the inequalities above, we have:
\begin{align*}
\Delta E(z, \Dt) - J(z) \leq \arcsin(z + \Dt(2+2z)) + 3 &\arcsin(z) \\
&- 8 \arcsin\left(\frac{\sin(\frac{1}{2}\arcsin(z)) - 2 \Dt}{1 + 2 \Dt}\right).
\end{align*}
We first argue that this function is nondecreasing in $z$, such that $\Delta E(z, \Dt) - J(z) \leq \Delta E(\frac{1}{2}, \Dt) - J(\frac{1}{2})$. For that we consider its partial derivative in $z$:
\begin{align*}
\frac{\partial(\Delta E(z, \Dt) - J(z))}{\partial z} = &\,\frac{3}{\sqrt{1 - z^2}} + \frac{1+2\Dt}{\sqrt{1 - (2\Dt + z + 2\Dt z)^2}} - \\
&\frac{4 \cos(\frac{1}{2}\arcsin(z))}{\sqrt{1 - z^2} \sqrt{(1+2\Dt)^2 - (\sin(\frac{1}{2}\arcsin(z)) - 2\Dt)^2}}\\
\geq &\,\frac{3}{\sqrt{1 - z^2}} + \frac{1+2\Dt}{\sqrt{1 - z^2}} - \\
&\frac{4}{\sqrt{1 - z^2}}\frac{\cos(\frac{1}{2}\arcsin(z))}{\sqrt{1 - (\sin(\frac{1}{2}\arcsin(z)) - 2\Dt)^2}}\\
\geq &\,\frac{4}{\sqrt{1 - z^2}}\left(1-\frac{\cos(\frac{1}{2}\arcsin(z))}{\sqrt{1 - \sin^2(\frac{1}{2}\arcsin(z))}}\right)\\
\geq & \,0
\end{align*}
As a result we can conclude the following:
\begin{align*}
\Delta E(z, \Dt) - J(z) &\leq \Delta E\left(\frac{1}{2}, \Dt\right) - J\left(\frac{1}{2}\right)\\
&= \frac{\pi}{2} + \arcsin\left(\frac{1}{2} + 3\Dt\right) - 8 \arcsin\left(1 - \frac{4 + \sqrt{6} - \sqrt{2}}{4 + 8 \Dt}\right)
\end{align*}
Note that this bound is $0$ whenever $\Dt = 0$. It is now sufficient to show that the derivative of this function with respect to $\Dt$ is at most $K$ for $0 \leq \Dt \leq \frac{1}{8}$. Let $a = 4 + \sqrt{6} - \sqrt{2} \approx 5.035$.
\begin{align*}
\frac{\partial \Delta E(\frac{1}{2}, \Dt)}{\partial\Dt} &= \frac{6}{\sqrt{4 - (1 + 3 \Dt)^2}} + \frac{64 a}{(4 + 8\Dt)^2 \sqrt{\frac{2 a}{4 + 8\Dt} - (\frac{a}{4 + 8\Dt})^2}}\\
&= \frac{6}{\sqrt{4 - (1 + 3 \Dt)^2}} + \frac{64 a}{(4 + 8\Dt) \sqrt{2 a (4 + 8\Dt) - a^2}}\\
&\leq \frac{6}{\sqrt{4 - (1 + 3 \Dt)^2}} + \frac{16 a}{\sqrt{8 a - a^2}}\\
&\leq 4.14 + 20.86 \leq 43\leq K
\end{align*}
Here we used that $3/\sqrt{4 - (1 + 3 \Dt)^2}$ is increasing in $\Dt$ and that $\Dt \leq \frac{1}{8}$. We conclude that $K \Delta t \geq \Delta E(z, \Delta t) - J(z)$ for all $\Delta t \in [0, T(z)]$. Thus, $\beta(t)$ does not leave $I$ in the time period $(t, t + T(z)]$. By repeating this argument whenever $\beta(t)$ leaves the safe zone, we can conclude that $|\beta(t) - \alpha(t)| \leq 8 \arcsin(z)$ for all times $t$.
\end{proof}

\subsection{Lipschitz stability ratio}
What remains is to analyze the approximation ratio of the chasing algorithm for \obb. Corollary~\ref{cor:interval} implies that the orientation $\beta$ of the chasing algorithm is at most an angle $(2 c + 2)\arcsin(z)$ away from the orientation of the diameter.

\begin{lemma}\label{lem:obblipschitz}
If $|\beta - \alpha| \leq (2 c + 2) \arcsin(z)$, then $f_{\obb}(\beta, P) \leq (4 c + 6) \min_x f_{\obb}(x, P)$.
\end{lemma}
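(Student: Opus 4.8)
The plan is to sandwich $f_{\obb}(\beta,P)$ between a lower bound on $\min_x f_{\obb}(x,P)$ that uses only the diameter $D$ and the width $W=w_{\alpha^\perp}(P)$ of the diametric box (recall $z=W/D$, $w_\alpha(P)=D$), and an upper bound on $f_{\obb}(\beta,P)$ that combines the hypothesis $|\beta-\alpha|\le(2c+2)\arcsin(z)$ with the fact that $P$ lies in the $D\times W$ diametric box.

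\emph{Lower bound.} I would first show $\min_x f_{\obb}(x,P)\ge \tfrac12 DW=\tfrac12 zD^2$. Every oriented bounding box of $P$ contains $\mathrm{conv}(P)$, so it suffices to prove $\mathrm{area}(\mathrm{conv}(P))\ge \tfrac12 DW$. Put the diametrical line (through the pair $p_1,p_2$ realizing $D$) along $\alpha$ at $\alpha^\perp$-coordinate $0$; then the $\alpha^\perp$-coordinates of $P$ span an interval $[y_{\min},y_{\max}]$ with $y_{\min}\le 0\le y_{\max}$ and $y_{\max}-y_{\min}=W$. Picking $q^+,q^-\in P$ attaining $y_{\max},y_{\min}$, the triangles $p_1p_2q^+$ and $p_1p_2q^-$ lie on opposite sides of the diametrical line, are both contained in $\mathrm{conv}(P)$, and have areas $\tfrac12 D\,y_{\max}$ and $\tfrac12 D\,|y_{\min}|$; these sum to $\tfrac12 DW$.

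\emph{Upper bound.} Write $\phi=|\beta-\alpha|$; since orientations live in $\reals\mathbb{P}^1$ we have $\phi\le \pi/2$, and by hypothesis $\phi\le(2c+2)\arcsin(z)$. The diameter bound gives $w_\beta(P)\le D$. Since $P$ is contained in the diametric box (extent $D$ along $\alpha$, extent $W$ along $\alpha^\perp$), projecting that box onto $\beta^\perp$ gives $w_{\beta^\perp}(P)\le D\sin\phi+W\cos\phi\le D\sin\phi+W$. Hence $f_{\obb}(\beta,P)\le D(D\sin\phi+W)=D^2\sin\phi+zD^2$. Applying Lemma~\ref{lem:trigineq}(1) with $\lambda=2c+2\ge 1$ and $x=z$ (together with the crude estimate $\sin\phi\le 1$ to cover the case where $(2c+2)\arcsin(z)$ would exceed $\pi/2$) yields $\sin\phi\le(2c+2)z$, so $f_{\obb}(\beta,P)\le(2c+3)zD^2$. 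Combining with the lower bound $\min_x f_{\obb}(x,P)\ge\tfrac12 zD^2$ gives $f_{\obb}(\beta,P)\le(4c+6)\min_x f_{\obb}(x,P)$.

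The main obstacle is minor and lies in the lower bound: one must correctly identify $W$ as $w_{\alpha^\perp}(P)$ (the width of the thinnest strip oriented along $\alpha$), and argue carefully that the two triangles genuinely lie on opposite sides of the diametrical line so that their areas add rather than overlap; the trigonometric step needs a small case distinction so that $\sin(|\beta-\alpha|)\le(2c+2)z$ holds even when $(2c+2)\arcsin(z)>\pi/2$. Everything else is routine projection and area estimation.
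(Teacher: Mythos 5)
Your proof is correct and arrives at exactly the constants the paper uses, so let me focus on how it differs.

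The lower bound is essentially the paper's own argument: both decompose the diametric box via two triangles with base $p_1p_2$ whose heights sum to $W = zD$, giving $\min_x f_{\obb}(x,P)\ge\tfrac12 zD^2$. (Your phrasing via $\mathrm{conv}(P)$ and $y_{\min},y_{\max}$ is a touch more explicit about the degenerate cases, which is welcome.)

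The upper bound is where you genuinely diverge. The paper bounds $w_{\beta^\perp}(P)$ by looking at the specific pair $v_1,v_2\in P$ that realize it, introduces the angle $\gamma$ between $v_1v_2$ and the diameter line, does a case split on whether the smallest such angle is $\gamma$ or $\pi-\gamma$, optimizes over $\gamma$ to find the worst case at $\gamma=\arcsin z$, and lands on $w_{\beta^\perp}(P)\le D\sin((2c+3)\arcsin z)$. You instead simply project the entire $D\times W$ diametric box onto $\beta^\perp$, getting $w_{\beta^\perp}(P)\le D\sin\phi+W\cos\phi$. Your intermediate bound is slightly weaker (the paper's $D\sin(\phi+\arcsin z)=D(\sin\phi\sqrt{1-z^2}+z\cos\phi)$ is a bit tighter), but after applying Lemma~\ref{lem:trigineq}(1) both collapse to the same $(2c+3)zD$, so no constant is lost. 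Your route avoids the case distinction on $\gamma'$ and the calculus step maximizing over $\gamma$ entirely, at the cost of throwing away a factor $\sqrt{1-z^2}$ that the final inequality does not need anyway. Your observation that when $(2c+2)\arcsin z>\pi/2$ one falls back to $\sin\phi\le 1\le(2c+2)z$ (via $\arcsin z\le\tfrac{\pi}{2}z$) is also a correct and clean way to close that edge case; the paper does not address it explicitly.

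In short: same lower bound, a materially simpler upper bound argument that still delivers the identical ratio $(4c+6)$.
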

\begin{proof}


Assume that at some time $t$ we have a diametric box with diameter $D$ and aspect ratio $z$, and let $(p_1, p_2)$ be the diametrical pair. The smallest \obb must contain $p_1$ and $p_2$ and must hit the sides of the diametric box at, say, $q_1$ and $q_2$ (see Figure~\ref{fig:approx-obb} (left)). The smallest \obb must contain the triangles formed by $\{p_1, p_2, q_1\}$ and $\{p_1, p_2, q_2\}$; with the diametrical pair a fraction $x$ along the minor axis in the box, the height of these triangles with base $p_1p_2$ are $x \cdot Dz$ and $(1-x) \cdot Dz$ respectively. Their combined area is thus $D^2 z /2$ and provides a lower bound for the area of the diametric box.

\begin{figure}[b]
    \centering
    \includegraphics{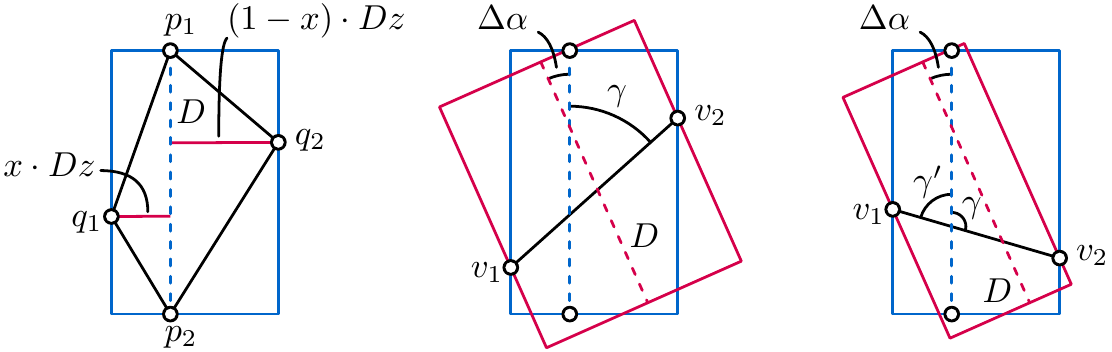}
    \caption{Illustrations supporting proof of Lemma~\ref{lem:obblipschitz}.
    }
    \label{fig:approx-obb}
\end{figure}
Now consider the box of the chasing algorithm, where $\Delta \alpha = |\beta - \alpha| \leq (2 c + 2)\arcsin(z)$. The major axis (in direction $\beta$) has length at most $D$. 
Let the minor axis be bounded by two points $v_1$ and $v_2$, and $\gamma$ the angle between the lines spanned by $v_1v_2$ and by the diametrical pair $p_1p_2$, on the opposite side of $\Delta \alpha$ with respect to $p_1p_2$. 
Let the smallest angle between those two lines be $\gamma'$. Then the distance between $v_1$ and $v_2$ is bounded by $z D / \sin(\gamma')$. 
Whenever $\gamma' = \gamma$, the angle between the minor axis of the chasing box and the line through $v_1$ and $v_2$ is $\pi/2 - \gamma - \Delta \alpha$ (see Figure~\ref{fig:approx-obb} (middle)). Thus, the length of the minor axis is $z D \cos(\pi/2 - \gamma - \Delta \alpha) / \sin(\gamma) = z D \sin(\gamma + \Delta \alpha) / \sin(\gamma)$.

However, it can also be the case that $\gamma' = \pi - \gamma$. The angle between the minor axis of the box and the line through $v_1$ and $v_2$ is now $\pi/2 - \gamma' + \Delta \alpha$ (see Figure~\ref{fig:approx-obb} (right)). Analogously, we hence find that the length of the minor axis is $z D \cos(\pi/2 - \gamma' + \Delta \alpha) / \sin(\gamma') = z D \sin(\gamma' - \Delta \alpha) / \sin(\gamma')$. Using $\gamma' = \pi - \gamma$, the length of the minor axis can be simplified to $z D \sin(\gamma + \Delta \alpha) / \sin(\gamma)$, which is the same expression as for $\gamma = \gamma'$.

Since the function $\sin(\gamma + \Delta \alpha) / \sin(\gamma)$ is decreasing in $\gamma$, we attain the maximum when $z / \sin(\gamma) = 1$ or $\gamma = \arcsin(z)$. Hence, using $\gamma = \arcsin(z)$, we get that the area of the box of the chasing algorithm is at most $D^2 \sin((2 c + 3)\arcsin(z))$, which is at most $D^2 z (2 c + 3)$ by Lemma~\ref{lem:trigineq}. Thus, $f_{\obb}(\beta, P) \leq (4 c + 6) min_x f_{\obb}(x, P)$.
\end{proof}


\begin{lemma}\label{lem:striplipschitz}
If $|\beta - \alpha| \leq (2 c + 2) \arcsin(z)$, then $f_{\strip}(\beta) \leq (4 c + 6) \min_x f_{\strip}(x)$.
\end{lemma}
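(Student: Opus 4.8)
The plan is to mirror the structure of the proof of Lemma~\ref{lem:obblipschitz}, but now tracking only the width of the strip in direction $\beta$ rather than the area of a box. As before, fix a time $t$ and consider the diametric box with diameter $D$ and aspect ratio $z$, realized by the diametrical pair $(p_1,p_2)$ of orientation $\alpha$; the two points $q_1,q_2$ that realize the width $zD$ of the thinnest strip in direction $\alpha$ lie on opposite sides of that strip. First I would establish a lower bound on the optimal strip width: since any covering strip must contain $p_1,p_2,q_1,q_2$, and the width of the diametric box already is the thinnest strip \emph{in direction $\alpha$}, the truly optimal strip has width at most $zD$ but we need a \emph{lower} bound. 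The key observation is the same triangle argument as in Lemma~\ref{lem:obblipschitz}: $\{p_1,p_2,q_1\}$ and $\{p_1,p_2,q_2\}$ together force area $\ge D^2 z/2$, and any covering strip of width $W$ containing a pair of points at distance $D$ has area at least $WD$ (the strip restricted to the convex hull of those points), so $W \ge D z/2$. Hence $\min_x f_{\strip}(x) \ge Dz/2$.

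Next I would upper bound $f_{\strip}(\beta)=w_{\beta^\perp}(P)$, the width of the point set measured orthogonal to $\beta$. This is exactly the length of the minor axis of the chasing box in direction $\beta$, which is precisely the quantity computed in the proof of Lemma~\ref{lem:obblipschitz}: with $\gamma'$ the acute angle between the line $v_1v_2$ realizing that width and the diametrical line $p_1p_2$, the same case analysis ($\gamma'=\gamma$ versus $\gamma'=\pi-\gamma$) yields that the minor-axis length equals $zD\,\sin(\gamma+\Delta\alpha)/\sin(\gamma)$ where $\Delta\alpha=|\beta-\alpha|\le (2c+2)\arcsin(z)$. Since $\sin(\gamma+\Delta\alpha)/\sin(\gamma)$ is decreasing in $\gamma$ and $\gamma\ge\arcsin(z)$ (because the points $v_1,v_2$ lie in the diametric box, so their projection onto $\alpha^\perp$ is at most $zD$, forcing $zD/\sin\gamma \le D$), the worst case is $\gamma=\arcsin(z)$, giving $f_{\strip}(\beta) \le D\sin\big((2c+3)\arcsin(z)\big) \le (2c+3) D z$ by Lemma~\ref{lem:trigineq}(1).

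Combining the two bounds gives $f_{\strip}(\beta) \le (2c+3)Dz \le (2c+3)\cdot 2 \cdot \min_x f_{\strip}(x) = (4c+6)\min_x f_{\strip}(x)$, as claimed. The main obstacle I anticipate is getting the lower bound on the optimal strip width exactly right: unlike the \obb case, here the ``optimal'' quantity is a width rather than an area, so one must be careful that the triangle/area argument is translated into a width bound correctly (dividing the area lower bound $D^2z/2$ by the diameter $D$, which is a valid upper bound on the extent of the optimal strip's supporting segment). I would double-check that the factor of $2$ lost here is consistent with the factor in Lemma~\ref{lem:obblipschitz}, where the area bound $D^2z/2$ is compared against $D^2\sin((2c+3)\arcsin z)$; indeed the numerology matches, so the same constant $(4c+6)$ emerges. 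Everything else is a direct reuse of the geometry already worked out for \obb, so the proof should be short.
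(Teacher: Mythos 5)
Your proof is correct, and your lower bound on $\min_x f_{\strip}(x)$ takes a genuinely different route from the paper's. The paper reuses the angle-bisector argument from Lemma~\ref{lem:flipaspectratio}: it finds the thinnest strip covering $p_1,p_2,q_1,q_2$ by bisecting the angle $\gamma$ between $L_p$ and $L_q$, bounds the resulting width by $D\sin(\tfrac{1}{2}\arcsin z)$, and then applies Lemma~\ref{lem:trigineq}(2) to obtain $\geq Dz/2$. You instead observe that the quadrilateral $p_1 q_1 p_2 q_2$ has area $\geq D^2 z/2$ (the exact triangle computation already used in the proof of Lemma~\ref{lem:obblipschitz}) and that the convex hull of $P$, having diameter $D$, has extent at most $D$ in the strip direction, hence is contained in a $W\times D$ rectangle; comparing areas yields $WD \geq D^2z/2$, i.e., $W\geq Dz/2$. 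Your argument is more elementary and avoids the bisection case analysis, while both routes deliver the identical bound needed. One caveat in your write-up: you say the strip restricted to the convex hull "has area at least $WD$", but the inequality you actually need is the reverse, namely that this region has area \emph{at most} $WD$, so that the lower bound $D^2z/2$ on the hull's area forces $WD \geq D^2z/2$. With the inequality direction corrected, the argument is sound, and the upper-bound portion (reusing the $\gamma'=\gamma$ versus $\gamma'=\pi-\gamma$ case analysis and the monotonicity of $\sin(\gamma+\Delta\alpha)/\sin\gamma$) matches the paper verbatim.
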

\begin{proof}
Assume that at some time $t$ we have a diametric box with diameter $D$ and aspect ratio $z$, and let $(p_1, p_2)$ be the diametrical pair. The width of the diametric box is determined by two points $q_1$ and $q_2$.

We first derive a lower bound for the width of the thinnest \strip; note that this follows the same rationale as in the proof of Lemma~\ref{lem:flipaspectratio}.
Such a strip must contain the points $p_1$, $p_2$, $q_1$ and $q_2$. As adding points to a point set can only widen the thinnest strip, we consider just these four points for a lower bound.
For the thinnest \strip, all four points are on the boundary of the strip in the worst case, and we assume w.l.o.g. that $p_1$ and $q_1$ are on the same side of the strip (same for $p_2$ and $q_2$). Consider the following lines: $L$ oriented in the orientation of the strip (parallel to its boundary), $L_p$ spanned by $p_1p_2$ and $L_q$ spanned by $q_1q_2$. Let the angle between $L_p$ and $L_q$ be $\gamma$, $\gamma \geq \arcsin(z)$. 
The distance between $q_1$ and $q_2$ is then $z D / \sin(\gamma)$. 
We denote the angle between $L$ and $L_p$ by $\gamma_p$, and between $L$ and $L_q$ by $\gamma_q$.
We observe that $\gamma_p + \gamma_q = \gamma$, as the orientation of $L$ must bisect the $\gamma$ angle for the strip to be thinnest. 
The width of the strip is $\max(D \sin(\gamma_p), z D \sin(\gamma_q) / \sin(\gamma))$. We show that this width is at least $D \sin(\frac{1}{2}\arcsin(z))$. This is clearly the case if $\gamma_p \geq \frac{1}{2}\arcsin(z)$, so assume the contrary. Since the function $\sin(\gamma - \gamma_p) / \sin(\gamma)$ is increasing, it is optimal to set $\gamma = \arcsin(z)$. But then $z D \sin(\gamma_q) / \sin(\gamma) = D \sin(\gamma_q) > D \sin(\frac{1}{2}\arcsin(z))$. Thus, the width of the thinnest strip is at least $D \sin(\frac{1}{2}\arcsin(z)) \geq D z / 2$ by Lemma~\ref{lem:trigineq}.

Now consider the strip of the chasing algorithm, with an orientation $\beta$ differing at most $\Delta \alpha = |\beta - \alpha| \leq (2 c + 2)\arcsin(z)$ from the orientation of the diametrical pair. Let the width of the strip be bounded by two points $v_1$ and $v_2$, where the angle between the line through $v_1$ and $v_2$ and the line through the diametrical pair $p_{1}p_{2}$, opposite of $\Delta \alpha$ with respect to $p_{1}p_{2}$, is $\gamma$. Let the smallest angle between those two lines be $\gamma'$. Note that, the distance between $v_1$ and $v_2$ is $z D / \sin(\gamma')$, since they define the width of the diameter box. When $\gamma'=\gamma$, the angle between the vector perpendicular to the orientation of the strip and the line through $v_1$ and $v_2$ is $\pi/2 - \gamma - \Delta \alpha$. Thus, the width of the strip is $z D \cos(\pi/2 - \gamma - \Delta \alpha) / \sin(\gamma) = z D \sin(\gamma + \Delta \alpha) / \sin(\gamma)$. 

On the other hand, whenever $\gamma' = \pi - \gamma$, the angle between the vector perpendicular to $\beta$ and the line $v_{1}v_{2}$ is $\pi/2 - \gamma' + \Delta \alpha$. Analogously, we find that the width of the strip is $z D \cos(\pi/2 - \gamma' + \Delta \alpha) / \sin(\gamma') = z D \sin(\gamma' - \Delta \alpha) / \sin(\gamma')$. Using $\gamma' = \pi - \gamma$, the width of the strip can be simplified to $z D \sin(\gamma + \Delta \alpha) / \sin(\gamma)$, which is the same expression as for $\gamma' = \gamma$.

Since the function $\sin(\gamma + \Delta \alpha) / \sin(\gamma)$ is decreasing in $\gamma$, we attain the maximum when $z / \sin(\gamma) = 1$ or $\gamma = \arcsin(z)$. Thus, using $\gamma = \arcsin(z)$, the width of the strip is at most $D \sin((2 c + 3)\arcsin(z))$, which is at most $D z (2 c + 3)$ by Lemma~\ref{lem:trigineq}. We finally obtain that $f_{\strip}(\beta) \leq (4 c + 6) min_x f_{\strip}(x)$.
\end{proof}

By combining Lemmata~\ref{lem:speedbound},~\ref{lem:obblipschitz}, and~\ref{lem:striplipschitz}, we obtain the following bounds on the Lipschitz stability of \obb and \strip.

\begin{theorem}\label{thm:lipschitz}
The following Lipschitz stability ratios hold for \obb and \strip, assuming diameter $D(t) \geq 1$ for all $t$ and points move with at most unit speed:
\begin{itemize}[noitemsep,topsep=0pt]
    \item $\LS(\obb, 43) \leq 18$,
    \item $\LS(\strip, 43) \leq 18$. 
\end{itemize}
\end{theorem}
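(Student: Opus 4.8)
The plan is to assemble Theorem~\ref{thm:lipschitz} directly from Lemmata~\ref{lem:speedbound},~\ref{lem:obblipschitz} and~\ref{lem:striplipschitz}, after pinning down the chasing algorithm that witnesses the infimum in the definition of $\LS$. Concretely, I would take the chasing algorithm of Section~\ref{sec:lipschitz} with $c = 3$ and maximum angular speed $K = 43$: it maintains the orientation $\beta(t)$ as its state, at each time computes the current diametrical orientation $\alpha(t)$, and rotates $\beta$ towards $\alpha$ at speed $K$ (keeping $\beta$ fixed whenever it already lies in the safe zone $[\alpha - H(z), \alpha + H(z)]$). Since the points move with unit speed, $\beta(t)$ changes by at most $K$ radians per time unit, so the output is $K$-Lipschitz and the algorithm is an admissible candidate in the infimum defining $\LS(\cdot, 43)$.

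Next I would invoke Lemma~\ref{lem:speedbound}: with $c = 3$ and $K \geq 43$ it already guarantees $|\beta(t) - \alpha(t)| \leq 8\arcsin(z(t))$ for every time $t$, where $z(t) \in [0,1]$ is the aspect ratio of the diametric box. Recall that this bound is genuine only while $z(t) \leq \frac{1}{2}$, where the safe zone $[\alpha - 3\arcsin(z), \alpha + 3\arcsin(z)]$ is a proper interval; once $z(t) > \frac{1}{2}$ we have $3\arcsin(z) > \pi/2$, the safe zone is all of $\orientationspace$, and the inequality holds trivially since any two orientations in $\reals\mathbb{P}^1$ differ by at most $\pi/2 \leq 8\arcsin(z)$. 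Either way, $|\beta(t) - \alpha(t)| \leq 8\arcsin(z(t)) = (2c+2)\arcsin(z(t))$ holds unconditionally, which is exactly the hypothesis of the two approximation lemmas.

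With this bound in hand, Lemma~\ref{lem:obblipschitz} applied with $c = 3$ gives $f_{\obb}(\beta(t), P(t)) \leq (4c+6)\min_\gamma f_{\obb}(\gamma, P(t)) = 18 \min_\gamma f_{\obb}(\gamma, P(t))$ at every time $t$, and Lemma~\ref{lem:striplipschitz} gives the analogous $f_{\strip}(\beta(t)) \leq 18\min_\gamma f_{\strip}(\gamma)$. Taking the maximum over $t \in [0,T]$, then the supremum over all admissible point sets $P(t)$ (diameter at least $1$, unit speed), and then the infimum over algorithms — witnessed by our chasing algorithm — yields $\LS(\obb, 43) \leq 18$ and $\LS(\strip, 43) \leq 18$.

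The only real obstacle here is bookkeeping rather than mathematics: one must check that the chasing algorithm is well-defined at the discrete flips of $\alpha$ (it is, because $\beta$ and its stored state vary continuously while $\alpha$ may jump, and ``rotate towards the current $\alpha$ at speed $K$'' remains meaningful), and one must observe that the quantity the algorithm actually chases — the diametrical orientation, not the optimal \obb or \strip orientation — is precisely the $\alpha$ appearing in the hypotheses of Lemmata~\ref{lem:obblipschitz} and~\ref{lem:striplipschitz}, so that the chain of implications composes without slack. All the delicate estimates were already carried out in Lemmata~\ref{lem:flipangle}--\ref{lem:speedbound}; here they only need to be instantiated at $c = 3$, for which $2c+2 = 8$ matches the safe-zone width used in Lemma~\ref{lem:speedbound} and $4c+6 = 18$ is the claimed ratio.
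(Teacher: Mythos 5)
Your proposal is correct and takes essentially the same approach as the paper, which proves the theorem by combining Lemmata~\ref{lem:speedbound}, \ref{lem:obblipschitz} and \ref{lem:striplipschitz} with $c = 3$; the extra bookkeeping you supply (admissibility of the chasing algorithm, the trivial case $z > \frac{1}{2}$ where the safe zone covers all of $\orientationspace$, and the observation that $\alpha$ is the diametrical orientation throughout) is implicit in the paper's one-sentence proof.
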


\section{Lipschitz stability of principal component}\label{sec:lipschitzpc}
The chasing algorithm does not work for the first principal component. Specifically, our scale normalization by requiring that the diameter is at least one at any point in time does not help. This can intuitively be attributed to the optimization function of \pc{}. Rather than being defined by some form of extremal points, $f_\pc{}$ is determined by variance: although the diameter may be large, many close points may still largely determine the first principal component. We formalize this via the lemma below. It implies that requiring a minimal diameter is not sufficient for a chasing algorithm with bounded speed to approximate \pc. The proof is inspired by the construction in \cite{DBLP:journals/comgeo/DimitrovKKR09} that shows the ratio on the areas between a bounding box aligned with the principal components and the optimal oriented bounding box can become infinite.

\begin{lemma}
For any constant $K$, there exists a point set $P(t)$ with minimum diameter $1$ at all times, such that any shape descriptor that approximates the optimum of $f_{\pc}$ must move with speed strictly greater than $K$.
\end{lemma}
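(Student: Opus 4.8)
The plan is to build an explicit time-varying point set where the diameter is fixed at one (or slightly more), but the first principal component is determined by a ``cloud'' of closely spaced points whose orientation flips rapidly. Concretely, I would take two diametrically opposite points $p_1=(-\tfrac12,0)$ and $p_2=(\tfrac12,0)$ that never move and pin down the diameter, and then place a large number $m$ of additional points in a tiny disk of radius $\varepsilon$ near the origin (so they never affect the diameter). By choosing $m$ large relative to $1/\varepsilon^2$, the contribution of the cloud to the covariance matrix dominates the contribution of $p_1,p_2$: the sum of squared deviations of the two fixed points is the constant $\tfrac12$ along the $x$-axis, whereas the cloud can be arranged to have variance $\Theta(m\varepsilon^2)$ in a chosen direction and near-zero variance orthogonal to it. Hence the top eigenvector of the covariance matrix — the orientation of \pc{} — is essentially the principal axis of the cloud, up to an error that shrinks as $m\varepsilon^2/1 \to \infty$.

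Next I would make the cloud's principal axis rotate. Let the cloud be, say, $m$ points equally spaced on a segment of length $2\ell$ (with $\ell=\Theta(\varepsilon)$) through the origin at angle $\psi(t)$; then the covariance of the cloud is rank-one along direction $\psi(t)$ with eigenvalue $\Theta(m\ell^2)$. Now I choose $\psi(t)$ to sweep from $0$ to $\pi$ (identified as the same orientation, but the point is that the \emph{optimal} orientation travels a full half-turn) over a time interval whose length I make as short as I like — this is legitimate because the cloud points only need to move distance $O(\ell)=O(\varepsilon)$, which at unit speed takes time $O(\varepsilon)$, independent of $K$. Actually, to force a genuine large-speed requirement I want the optimal \pc{} orientation to change by a fixed angle (say $\pi/2$) in time $o(1/K)$: I set $\ell$ small enough and the sweep to complete in time $1/(2K)$, noting the required point speed is $O(K\ell)$, which I can keep below $1$ by taking $\ell$ small. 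Throughout, the diameter stays exactly $1$ (realized by $p_1,p_2$), so the normalization hypothesis is satisfied, yet $f_\pc$ has a unique minimizer that moves at angular speed roughly $\pi K$.

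Finally I would turn this into the claimed statement about approximation. The key quantitative fact is that when all of $P$'s ``mass'' is concentrated in one direction, $f_\pc(\gamma,P)$ as a function of orientation $\gamma$ is sharply peaked: $\min_\gamma f_\pc = \lambda_{\min}$ of the covariance matrix, which here is $\Theta(1)$ (dominated by $p_1,p_2$ in the direction orthogonal to the cloud, plus the tiny $O(\varepsilon^2)$ thickness of the cloud), while $f_\pc(\gamma^\perp_{\mathrm{opt}},P)=\lambda_{\max}=\Theta(m\ell^2)$, a ratio that blows up with $m$. More usefully, for an orientation $\gamma$ at angle $\delta$ away from the optimum, $f_\pc(\gamma,P) \ge \lambda_{\min} + (\lambda_{\max}-\lambda_{\min})\sin^2\delta$, so any algorithm staying within a constant factor of $\min_\gamma f_\pc$ must keep $\sin^2\delta = O(\lambda_{\min}/(\lambda_{\max}-\lambda_{\min})) = O(1/(m\ell^2))$, i.e. $\delta \to 0$ as $m\to\infty$; yet the optimum itself moves by $\Theta(1)$ in time $1/(2K)$, forcing the algorithm's speed above $K$ once $m$ is large enough. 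I would assemble these estimates, choose $\varepsilon,\ell,m$ in terms of $K$ at the end (first $\ell$ small enough that point speed stays $\le 1$, then $m$ large enough that the eigenvalue gap forces $\delta$ smaller than the half-angle the optimum sweeps), and conclude.

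The main obstacle I expect is the bookkeeping around two competing approximations: I need $m\ell^2$ large enough that the cloud dominates the covariance matrix and the eigenvalue gap is wide (so the approximation ratio constraint pins the algorithm's orientation tightly), but simultaneously $\ell$ must be small enough (and hence the sweep fast enough at bounded point speed) that the optimum's angular velocity exceeds $K$ — and these pull $m$ in opposite directions through the time budget. The resolution is that the two constraints decouple: point speed depends on $K\ell$ (fix $\ell$ first, independent of $m$), while the eigenvalue gap depends on $m\ell^2$ (then send $m\to\infty$ with $\ell$ fixed). Getting the perturbation bound on the top eigenvector clean — ideally via a direct $2\times 2$ computation of the covariance matrix rather than a black-box eigenvalue-perturbation theorem — will also take some care but is routine.
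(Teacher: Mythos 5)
Your construction is essentially the paper's: a fixed diametrical pair pins the diameter at $1$ while a dense cloud of nearby points dominates the covariance matrix and drags the optimal \pc{} orientation through a large angle in time $o(1/K)$ using only $O(\ell)$ point displacement, so point speeds stay bounded (the paper places the cloud next to one of the diametrical points rather than at the midpoint, which is immaterial). Your quantitative treatment --- the identity $f_\pc(\gamma,P)=\lambda_{\min}+(\lambda_{\max}-\lambda_{\min})\sin^2\delta$ and the decoupling of $\ell$ (fixed first, from the speed budget) from $m$ (sent to infinity for the eigenvalue gap) --- is actually more detailed than the paper's proof sketch; the one slip is that $\lambda_{\min}$ is not $\Theta(1)$ when the cloud aligns with the diametrical pair, but a smaller $\lambda_{\min}$ only tightens the constraint on $\delta$, so the argument stands.
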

\begin{proof}[Proof sketch]
Consider the point set $P$ containing two points that lie on a horizontal line and form a diametrical pair for $P$. All other points in $P$ form a dense subset $P'$ that is not collinear with the two points that form the diametrical pair, but are located very close to only one of the two points. See Figure~\ref{fig:pca-diam} for the construction.

Remember that the optimization function for the first principal component minimized the sum of squared distances from the points to the line.
The dense subset $P'$ contains so many points that any line that differs more than $\epsilon$ from the orientation of the line $l$ through $P'$, has a significantly larger sum of squared distances from the points in $P$ to $l$.
Hence \pc{} follows $P'$ regardless of the position of the two points forming the diametrical pair.

\begin{figure}[ht]
    \centering
    \includegraphics{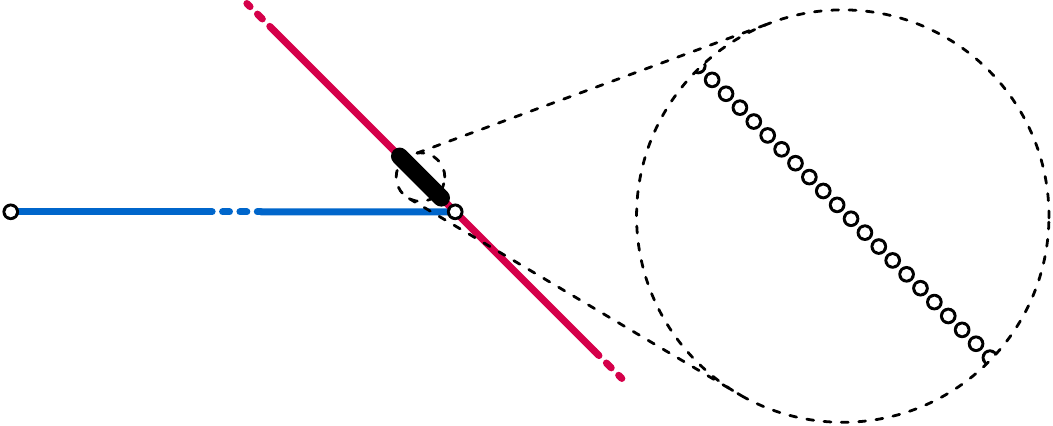}
    \caption{Construction that shows how \pc{} can move arbitrarily fast, despite a minimal diameter. The two points connected by the blue line form a diametrical pair for the whole point set. The dense set of points located arbitrarily close to the right point can move around it in an infinitesimally short amount of time. Because the point set is so dense, the orientation of the first principal component (in red) follows it regardless of how far the leftmost point is placed.}
    \label{fig:pca-diam}
\end{figure}

For any constant $K$, the points in $P'$ can be placed and moved in such a way that in an infinitesimally small time frame, they can move around one of the points of the diametrical pair and change the orientation of \pc{} by more than $K$.
Thus any shape descriptor that approximates the optimum of $f_{\pc}$ must also change its orientation by more than $K$.
\end{proof}

\section{Conclusion}
\label{sec:conclusion}

We studied the topological stability (ratio of continuous solutions to optimal discontinuous solutions) and Lipschitz stability (ratio of continuous solutions with bounded speed to optimal discontinuous solutions) of three common orientation-based shape descriptors. Although stateless algorithms cannot achieve topological stability, we proved tight bounds on the topological stability ratio for state-aware algorithms.
Our Lipschitz analysis focuses on upper bounds, showing that a chasing algorithm achieves a constant ratio for a constant maximum speed, for two of the three considered descriptors. It remains open to establish whether lower bounds exist that are stronger than those already given by our topological stability results.
We believe that our analysis techniques for the Lipschitz upper bounds are of independent interest, to analyze other problems that could be approached via a chasing algorithm.

\bibliography{references}

\end{document}